\def\final{0}
\def\podscamera{0}
\definecolor{DarkGreen}{rgb}{0.1,0.5,0.1}
\definecolor{DarkRed}{rgb}{0.5,0.1,0.1}
\definecolor{DarkBlue}{rgb}{0.2,0.2,0.6}
\newcommand{\mynote}[1]{\marginpar{\tiny\sf #1}}
\newcommand{\mynote}[1]{}
\newcommand\N{\mathbb{N}}
\newcommand\R{\mathbb{R}}
\newcommand\cA{\mathcal{A}}
\newcommand\cB{\mathcal{B}}
\newcommand\cQ{\mathcal{Q}}
\newcommand\cR{\mathcal{R}}
\newcommand\cX{\mathcal{X}}
\newcommand\cL{\mathcal{L}}
\newcommand\poly{\mathrm{poly}}
\newcommand{\polylog}{\mathrm{polylog}}
\newcommand\bits{\{0,1\}}
\newcommand{\getsr}{\gets_{\mbox{\tiny R}}}
\newcommand\set[1]{\left\{#1\right\}} 
\newcommand{\from}{:}
\newcommand{\grad}{\nabla}
\DeclareMathOperator*{\Expectation}{\mathbb{E}}
\newcommand{\Ex}[2]{\Expectation_{#1}\left(#2\right)}
\DeclareMathOperator*{\Probability}{\mathrm{Pr}}
\newcommand{\prob}[1]{\mathrm{Pr}\left(#1\right)}
\newcommand{\Prob}[2]{\Probability_{#1}\left(#2\right)}
\DeclareMathOperator*{\argmin}{\mathrm{argmin}}
\newcommand{\INDSTATE}[1][1]{\STATE\hspace{#1\algorithmicindent}}
\newcommand{\univ}{\cX}
\newcommand{\eps}{\varepsilon}
\newcommand{\db}{D}
\newcommand{\rows}{n}
\newcommand{\row}{x}
\newcommand{\mech}{\cA}
\newcommand{\loss}{\ell}
\newcommand{\dom}{\Theta}
\newcommand{\param}{\theta}
\newcommand{\query}{q}
\newcommand{\paramdimension}{d}
\newcommand{\privparamt}{\param}
\newcommand{\acc}{\alpha}
\newcommand{\queryset}{\cQ}
\newcommand{\lossset}{\cL}
\newcommand{\err}{\mathrm{err}}
\newcommand{\losst}{\loss}
\newcommand{\paramt}{\hat{\param}}
\newcommand{\scale}{S}
\newcommand{\dbt}{\hat{\db}}
\newtheorem{theorem}{Theorem}
\newtheorem{lemma}{Lemma}
\newtheorem{claim}{Claim}
\newtheorem{theorem}{Theorem}[section]
\newtheorem{lemma}[theorem]{Lemma}
\newtheorem{claim}[theorem]{Claim}
\theoremstyle{definition}
\newtheorem{definition}[theorem]{Definition}
	\newfont{\mycrnotice}{ptmr8t at 7pt}
	\newfont{\myconfname}{ptmri8t at 7pt}
\begin{document}
	\title{Private Multiplicative Weights Beyond Linear Queries
	}

	\numberofauthors{1} 
	\author{
		\alignauthor Jonathan Ullman\titlenote{The majority of this work was done while the author was at the Center for Research on Computation and Society at Harvard University.  Supported by NSF grant CNS-1237235.}\\
      		 \affaddr{Columbia University}\\
       		\affaddr{New York, NY}\\
      		 \email{jullman@cs.columbia.edu}
	}
	
	\maketitle
\else
	\title{Private Multiplicative Weights Beyond Linear Queries}

	\author{Jonathan Ullman\ifnum\final=1\thanks{The majority of this work was done while the author was a Postdoctoral Fellow at the Center for Research on Computation and Society at Harvard University.  Supported by NSF grant CNS-1237235.}\fi \\
	\ \\
	Department of Computer Science\\
	Columbia University, New York, NY. \\
	\href{mailto:jullman@cs.columbia.edu}{jullman@cs.columbia.edu}}
	\begin{document}
	\maketitle
\fi

\begin{abstract}
A wide variety of fundamental data analyses in machine learning, such as linear and logistic regression, require minimizing a convex function defined by the data.  Since the data may contain sensitive information about individuals, and these analyses can leak that sensitive information, it is important to be able to solve convex minimization in a privacy-preserving way.  

A series of recent results show how to accurately solve a single convex minimization problem in a differentially private manner.  However, the same data is often analyzed repeatedly, and little is known about solving multiple convex minimization problems with differential privacy.  For simpler data analyses, such as linear queries, there are remarkable differentially private algorithms such as the private multiplicative weights mechanism (Hardt and Rothblum, FOCS 2010) that accurately answer exponentially many distinct queries.  In this work, we extend these results to the case of convex minimization and show how to give accurate and differentially private solutions to \emph{exponentially many} convex minimization problems on a sensitive dataset.
\end{abstract}

\ifnum\podscamera=0
\vfill
\newpage

\tableofcontents
\vfill
\newpage
\fi

\section{Introduction}
Consider a dataset $D = (x_1,\dots,x_n) \in \univ^n$ in which each of the $n$ rows corresponds to an individual's record, and each record consists of an element of some data universe $\univ$.  The goal of privacy-preserving data analysis is to enable rich statistical analyses on such a dataset while protecting the privacy of the individuals.  It is especially desirable to achieve \emph{differential privacy}~\cite{DworkMNS06}, which guarantees that no individual's data has a significant influence on the information released about the dataset.  

In this work we consider differentially private algorithms that answer \emph{convex minimization (CM) queries} on the sensitive dataset.  A CM query is specified by a convex \emph{loss function} $\ell \from \dom \times \univ \to \R$, where $\dom$ is a convex set, and the corresponding \emph{query} $q_{\ell} \from \univ^* \to \dom$ selects the point $\param \in \dom$ that minimizes the average loss on the rows of $D$.  That is,
$$
\query_{\loss}(D) = \argmin_{\param \in \dom} \frac{1}{n} \sum_{i=1}^{n} \ell(\param; x_i).
$$
These queries capture fundamental data analyses such as linear and logistic regression and support vector machines.  For example, we may have a dataset consisting of $n$ labeled examples $(x_1,y_1), \dots, (x_n, y_n)$ from the data universe $\univ = \R^d \times \R$ (corresponding to $d$ attributes and a single label per individual), and wish to compute the linear regression
$$
\param^* = \argmin_{\param \in \R^d} \frac{1}{n} \sum_{i=1}^{n} \left(\langle \param, x_i \rangle - y_i \right)^2
$$

Starting with the results of Dwork and Lei~\cite{DworkL09} and Chaudhuri, Monteleone, and Sarwate~\cite{ChaudhuriMS11}, there has been a long line of work~\cite{KiferST12, SmithT13,JainT14,BassilyST14} showing how to compute an accurate and differentially private answer to a single CM query.  However, in practice the same sensitive dataset will be analyzed by many different analysts, and together these analysts will need answers to a large number of distinct CM queries on the dataset.  Any algorithm for solving a single CM query can be applied repeatedly to answer multiple CM queries using the well known composition properties of differential privacy.  However, this straightforward approach incurs a significant loss of accuracy, and renders the answers meaningless after a small number of queries (roughly $n^2$ in most natural settings).

Fortunately, for many interesting types of queries, there are remarkable differentially private algorithms~\cite{BlumLR08, DworkNRRV09, DworkRV10, RothR10, HardtR10, GuptaRU12, HardtLM12} that are capable of giving accurate answers to \emph{exponentially many} different queries---far greater than what can be achieved using straightforward composition.  The most extensively studied case is \emph{linear queries}, which are specified by a property $p$ and ask ``What fraction of rows in $\db$ satisfy $p$?''  It is also known how to answer exponentially many arbitrary Lipschitz, real-valued queries~\cite{DworkRV10}, which generalize linear queries.  There are, however, no known non trivial algorithms for privately and accurately answering large sets of CM queries.

In this work we show for the first time that it is possible to give accurate and differentially private answers to exponentially many convex minimization queries.  We do so via an extension of the simple and elegant private multiplicative-weights framework of Hardt and Rothblum~\cite{HardtR10}, which is known to achieve asymptotically optimal worst-case accuracy~\cite{BunUV14} and worst-case running time~\cite{Ullman13} for answering large families of linear queries.  Moreover, private multiplicative weights was shown to have a number of practical advantages~\cite{HardtLM12}, including good accuracy and running time in practice on low-dimensional datasets, parallelism, and simple implementation, all of which are preserved by our extension.  We believe that our technique for adapting the private multiplicative weights framework beyond linear queries may be useful in the future design of differentially private algorithms for other types of non linear queries.

\subsection{Our Results}

We can now state our results for answering large numbers of CM queries.  In order to answer even a single CM query, we need to place some sort of restrictions on the loss function $\loss$.  In particular, we consider the following types of restrictions on $\ell$:
\begin{itemize}
\item Lipschitz. $\| \grad \ell(\param; \row) \|_2 \leq 1$ for every $\param \in \dom, \row \in \univ$ (where the gradient is taken with respect to $\param$ for fixed $\row$).
\item $d$-Bounded.  $\dom \subseteq \{ \param \in \R^{d} \mid \|\param\|_2 \leq 1\}$.
\item $\sigma$-Strongly Convex.  $\loss(\param';\row) \geq \loss(\param; \row) + \langle \grad \loss(\param; \row), \param' - \param \rangle + \frac{\sigma}{2} \|\param' - \param\|_2^2$ for every $\param, \param' \in \dom, \row \in \univ$ (where, again, the gradient is taken with respect to $\param$ for fixed $\row$).
\item Unconstrained Generalized Linear Models (UGLM).  $\dom = \R^d$, $\univ \subseteq \R^d$ and $\loss(\param; \row) = \loss'(\langle \param, \row \rangle)$ for a convex function $\loss' \from \R \to \R$.
\end{itemize}
The constant $1$ in the the Lipschitz and boundedness conditions is arbitrary.  One can obtain more general statements in terms of these parameters by rescaling.  For simplicity, we will assume throughout that all loss functions $\loss$ are differentiable, and thus will freely use the gradient operator.  However, for all our algorithms and theorems, the assumption that $\ell$ is differentiable is unnecessary and $\grad \loss$ can be replaced with an arbitrary subgradient of $\loss$.

Table 1 summarizes our results for these different restrictions on the loss functions.  In all cases our algorithms are interactive.  They take a dataset $\db \in \univ^{\rows}$ as input, interact with a data analyst who chooses a sequence of loss functions $\loss^1,\dots, \loss^k$, and return answers $\hat{\param}^{1}, \dots, \hat{\param}^{k} \in \dom$ such that for every $j = 1,\dots,k$
$$
\frac{1}{n} \sum_{i=1}^{n} \loss^j(\hat{\param}^{j}; x_{i}) \leq \left(\min_{\param \in \dom} \frac{1}{n} \sum_{i=1}^{n} \loss^j(\param; x_{i})\right) + \alpha
$$
for some error parameter $\alpha$.  
We note that the data analyst may be \emph{adaptive}, meaning the choice of $\loss_j$ can depend on the previous losses and answers $\loss^1,\hat{\param}^1,\dots,\loss^{j-1}, \hat{\param}^{j-1}$.  
Differential privacy becomes easier to achieve as $n$ becomes larger.  Thus, we ask how big $n$ has to be to achieve a given level of accuracy $\alpha$ for answering $k$ queries from a family of loss functions $\lossset$.  

Our results are summarized in the following table.  We emphasize that if one were to use an algorithm for answering a single CM query repeatedly via composition, then required database size $n$ would depend polynomially on $k$, whereas the error depends only polylogarithmically on $k$ in each of our results.

\begin{table}[t!] 
\begin{center}
\ifnum\podscamera=0
\begin{tabular}{|m{4cm}|m{4.5cm}|m{6cm}|}
\hline
Restrictions & $n$ Needed for a Single Query  & $n$ Needed for $k$ Queries \\
\hline
Linear Queries & $O\left( \frac{1}{\alpha} \right)$ \cite{DworkMNS06} & $\tilde{O}\left( \frac{\sqrt{\log |\univ|} \cdot \log k}{\alpha^2}\right)$ \cite{HardtR10}\\
\hline
Lipschitz, $d$-Bounded & $\tilde{O}\left( \frac{\sqrt{d}}{\alpha} \right)$ \cite{BassilyST14} & $\tilde{O}\left(\max\left\{ \frac{\sqrt{d \cdot \log |\univ|}}{\alpha^2}, \frac{ \log k \cdot \sqrt{\log |\univ|}}{\alpha^2}\right\} \right)$ \\
\hline
Lipschitz, $d$-Bounded, UGLM & $\tilde{O}\left( \frac{1}{\alpha^2} \right)$ \cite{JainT14} & $\tilde{O}\left( \max\left\{ \frac{\sqrt{\log |\univ|}}{\alpha^3}, \frac{\log k \cdot \sqrt{\log |\univ|}}{\alpha^2}\right\} \right)$ \\
\hline
Lipschitz, $d$-Bounded, $\sigma$-Strongly Convex & $\tilde{O}\left( \sqrt{\frac{d}{\sigma \alpha}} \right)$ \cite{BassilyST14} & $\tilde{O}\left( \max\left\{ \sqrt{\frac{d \cdot \log |\univ|}{\sigma \alpha^{3}}}, \frac{\log k \cdot \sqrt{\log |\univ|} }{\alpha^2}\right\} \right)$ \\
\hline
\end{tabular}
\else
\begin{tabular}{|m{1.7cm}|m{1.8cm}|m{3.6cm}|}
\hline
Restrictions & $n$ needed for one query   &  $n$ needed for $k$ queries \\
\hline
Linear Queries & $O\left( \frac{1}{\alpha} \right)$ \cite{DworkMNS06} & $\tilde{O}\left( \frac{\sqrt{\log |\univ|} \cdot \log k}{\alpha^2}\right)$ \cite{HardtR10}\\
\hline
Lipschitz, $d$-Bounded & $\tilde{O}\left( \frac{\sqrt{d}}{\alpha} \right)$ \cite{BassilyST14} & $\tilde{O}\Big(\max\Big\{\frac{\sqrt{d \cdot \log |\univ|}}{\alpha^2},$ $~~~~~~~~\frac{ \log k \cdot \sqrt{\log |\univ|}}{\alpha^2}\Big\} \Big)$ \\
\hline
Lipschitz, $d$-Bounded, UGLM & $\tilde{O}\left( \frac{1}{\alpha^2} \right)$ \cite{JainT14} & $\tilde{O}\Big( \max\Big\{  \frac{\sqrt{\log |\univ|}}{\alpha^3},$ $~~~~~~~~\frac{\log k \cdot \sqrt{\log |\univ|}}{\alpha^2}\Big\} \Big)$ \\
\hline
Lipschitz, $d$-Bounded, $\sigma$-Strongly Convex & $\tilde{O}\Big( \sqrt{\frac{d}{\sigma \alpha}} \Big)$ \cite{BassilyST14} & $\tilde{O}\Big( \max\Big\{  \sqrt{\frac{d \cdot \log |\univ|}{\sigma \alpha^{3}}},$ $~~~~~~~~\frac{\log k \cdot \sqrt{\log |\univ|} }{\alpha^2}\Big\} \Big)$ \\
\hline
\end{tabular}
\fi
\caption{\footnotesize Accuracy guarantees for answering various families of CM queries under differential privacy.  New results are shown in green.  Error bounds for linear queries, which are a special case of Lipschitz, 1-bounded CM queries are shown for comparison.  Error bounds for answering a single CM query under each restriction is also shown for comparison.  All results are stated for $(\eps, \delta)$-differential privacy for $\eps$ constant and $\delta$ a negligible function of $n$.}
\end{center}
\label{table:results}
\end{table} 



Our algorithms have running time $\poly(n, |\univ|, k)$ assuming oracle access to $\loss$ and its gradient for every $\loss$.  Thus, our algorithms are not generally efficient, as $|\univ|$ will often be exponential in the dimensionality of the data.  For example, if $\univ = \bits^{d}$, then the dataset consists of $nd$ bits yet our algorithms run in time $2^d$, even when $k$ is polynomial and every loss function and its gradient can be efficiently computed.  Unfortunately this exponential running time is inherent, under widely believed cryptographic assumptions.  Even answering $n^{2+o(1)}$ linear queries, which are a special case of Lipschitz, $1$-Bounded CM queries, requires exponential time~\cite{Ullman13}.  

Additionally, our algorithms require significantly more error than answering a single CM query.  For example, in the case of Lipschitz, $d$-Bounded CM queries, a single query can be answered with a dataset of size $n = \tilde{O}(\sqrt{d}/\alpha)$, whereas answering $\poly(n)$ queries with our algorithm requires a dataset of size $n = \tilde{O}(\sqrt{\log |\univ|} \cdot \log k / \alpha^2)$.  By the results of Kasiviswanathan, Rudelson, and Smith~\cite{KasiviswanathanRS13}, a database of size at least $n = \Omega(1/\alpha^2)$ is necessary when answering $\gg 1/\alpha^2$ queries.  See Section~\ref{sec:complexity} for a more detailed discussion of the lower bounds and computational complexity issues that arise.

Since the error bounds and running time of our algorithm both depend on $|\univ|$, our error guarantees may appear vacuous when $\univ$ is infinite.  For example, in many common applications $\univ = \set{\theta \in \R^d \mid \|\theta\|_2 \leq 1}$ is the $d$-dimensional unit ball.  However, in many settings it is essentially without loss of generality (up to, say, a factor of $2$ in the error) to round the data points to some finite, data universe.  Typically if the data points lie in a $d$-dimensional space, the size of such a data universe will be $(d/\alpha)^{O(d)}$.  We leave it for future work to find algorithms that apply to continuous data universes in a more natural way.

\subsection{Techniques}

In order to describe our algorithms, it will be helpful to start by sketching the private multiplicative weights framework of Hardt and Rothblum~\cite{HardtR10} for answering linear queries.  Here, we focus on the ``offline'' variant from~\cite{GuptaHRU11,GuptaRU12,HardtLM12}, in which the $k$ loss functions $\lossset = \set{\loss^1,\dots,\loss^k}$ are specified in advance by the analyst.  The offline variant contains the main novel ideas, although we will present our algorithm for the online case.

The algorithm receives as input a dataset $D \in \univ^{\rows}$ and a set of queries $\queryset$.  It will be useful to represent $D$ as a ``histogram'' over $\univ$, which is a vector indexed by $\univ$ where the $\row$-th entry is the probability that a random row of $D$ has type $\row$.  In this representation, a linear query $q$ can be written as $\langle q, D \rangle$.

The algorithm begins with a hypothesis dataset $D^1$, which represents an uneducated guess about $D$. It then produces a sequence of $T$ differentially private hypotheses $D^{1},\dots,D^{T}$ that are increasingly good approximations to $D$.  In each round $t = 1,\dots,T$, the algorithm will privately find the query $\query^{t} \in \queryset$ such that $D^{t}$ gives a maximally inaccurate answer.  That is, $|\langle q^{t}, D^{t} \rangle - \langle \query^{t}, D \rangle|$ is as large as possible.  Finding this query can be done privately using a standard application of the exponential mechanism~\cite{McSherryT07}.  The algorithm then generates $D^{t+1}$ using $D^t$ and $\query^t$ via the multiplicative weights update rule.

One can show that after a small number of rounds $T$, the hypothesis $D^{T}$ answers every query accurately. 
The key to the analysis is the following standard fact about the multiplicative-weights update rule: if one can find a vector $u^t$ such that $|\langle u^{t}, D^{t} \rangle -  \langle u^{t}, D \rangle|$ is large, then the distance between $D^{t+1}$ and $D$ decreases significantly.  Notice that this condition on $u^t$ is precisely that $u^t$ is a linear query for which $D^{t}$ is inaccurate.  Thus, when answering linear queries, we can simply take $u^t$ to be $\query^t$.  

In the case of CM queries, we can still use the exponential mechanism to find a loss function $\loss^{t} \in \lossset$ such that the minimizer of $\loss^{t}$ on $D^{t}$ is not a good minimizer of the loss on the true dataset $D$.  However, since CM queries are non linear, this information does not immediately give us a suitable vector $u^{t}$ for the multiplicative-weights update.  The key new step in our algorithm is a differentially private way to find a suitable vector $u^{t}$.  Specifically, we show how to take a query $\query_{\loss}$ such that $\query_{\loss}(\db^{t})$ is inaccurate for the true dataset $\db$, and a differentially private approximation to the correct answer $\query_{\loss}(\db)$, and use it to find a differentially private vector $u^{t}$ such that the error $|\langle u^{t}, D^{t} \rangle -  \langle u^{t}, D \rangle|$ is large.  As with linear queries, having such vectors is sufficient to argue accuracy of the algorithm.

Our approach is inspired by the work of Kasiviswanathan, Rudelson, and Smith~\cite{KasiviswanathanRS13} who prove lower bounds on the error required for answering certain CM queries.  Specifically, they use sufficiently accurate answers to non linear CM queries to extract linear constraints on the dataset, and these linear constraints can then be combined with linear reconstruction attacks to violate privacy.  For our results, we use the information that $D^t$ gives an inaccurate answer to a non linear CM query to find a linear query that $D^t$ also answers inaccurately.  To do so, we make use of the ``dual certificate'' style of argument from convex optimization.  That is, we derive and analyze the linear query using the first-order optimality conditions on the gradient of $\loss$.

\subsection{Connection to Generalization Error in Adaptive Data Analysis}
Very recently, Dwork et al.~\cite{DworkFHPRR15} and Hardt and Ullman~\cite{HardtU14} showed a connection between differential privacy and \emph{generalization error} in adaptive data analysis, in which the analyst asks an adaptively chosen sequence of queries.  By generalization error, we mean the difference between the answers to the queries on the dataset $\db$ and the answers to the queries on the unknown population from which $\db$ was drawn.  Dwork et al.~ showed that differentially private algorithms that have low error with respect to the dataset $\db$ also have low generalization error.  Surprisingly, using known differentially private algorithms for answer linear queries yields state-of-the-art bounds on the generalization error required to answer an interactive sequence of linear queries.  Bassily et al.~\cite{BassilySSU15} extended the connection between differential privacy and generalization error to the more general family of CM queries.  Plugging the results of this paper into their theorem yields state-of-the-art bounds on the generalization error required to answer adaptively chosen CM queries.

\section{Preliminaries}

\subsection{Datasets \ifnum\podscamera=1 \else,Histograms, \fi and Differential Privacy}
We define a \emph{dataset} $\mathcal{\db} \in \univ^{\rows}$ to be a vector of $\rows$ rows $\mathcal{\db} = (x_1,\dots,x_{\rows}) \in \univ^{\rows}$ from a \emph{data universe} $\univ$.  We say that two datasets $\mathcal{\db}, \mathcal{\db'} \in \univ^{\rows}$ are \emph{adjacent} if they differ on only a single row, and we denote this by $\mathcal{\db} \sim \mathcal{\db'}$.
\ifnum\podscamera=1
We now define \emph{differential privacy}~\cite{DworkMNS06}.
\begin{definition}
\else
\begin{definition} [Differential Privacy~\cite{DworkMNS06}]
\fi
\label{def:dp} An algorithm $\mech \from \univ^{\rows} \to \cR$ \ifnum\podscamera=1 satisfies \else is \fi \emph{$(\eps, \delta)$-differentially private} if for every two adjacent datasets $\mathcal{\db} \sim \mathcal{\db'}$ and every subset $S \subseteq \cR$,
$$
\prob{\mech(\mathcal{D}) \in S} \leq e^{\eps} \cdot \prob{\mech(\mathcal{D'}) \in S} + \delta.
$$
\end{definition}

In our algorithm and analysis it will be useful to represent a dataset by its \emph{histogram}.  In the histogram representation, the dataset $\mathcal{D}$ is viewed as a probability distribution over $\univ$.  We represent this probability distribution as a vector in $\db \in\R^{\univ}$ where for every $\row \in \univ$, $\db(\row) = \Prob{\row' \getsr \mathcal{\db}}{\row' = \row}.$ The condition that $\mathcal{D} \sim \mathcal{D'}$ implies that their histograms satisfy $\|D - D'\|_{1} \leq 1/\rows$.  In the technical sections of this work we will assume all datasets are represented as histograms.

\subsection{Convex Minimization (CM) Queries and Accuracy}
In this work we are interested in algorithms that answer \emph{convex minimization (CM) queries} on the dataset.  A CM query is defined by a convex \emph{loss function} $\loss \from \dom \times \univ \to \R$, where $\dom \subseteq \R^{\paramdimension}$ is a convex set.  The associated query $\query_{\loss} \from \univ^{*} \to \dom$ seeks to find $\param \in \dom$ that minimizes the expected loss.  Formally,
$$
\query_{\loss}(\db) = \argmin_{\param \in \dom} \Ex{\row \getsr \db}{\loss(\param; \row)} = \argmin_{\param 
\in \dom} \sum_{\row \in \univ} \db(\row) \cdot \loss(\param; \row)
$$

We will use $\lossset = \set{\loss_{1},\loss_{2},\dots}$ to denote a set of convex loss functions and $\queryset_{\lossset} = \set{\query_{\loss_{1}}, \query_{\loss_{2}},\dots}$ to denote the associated set of convex minimization queries.  
We will often want to think of $\loss$ as a function of $\param$, with $\row$ fixed.  To this end, we will write $\loss_{\row}(\param) = \loss(\param; \row)$.  We will also abuse notation and write $\loss(\param; \db) = \sum_{\row \in \univ} \db(\row) \cdot \loss(\param; \row)$ and $\loss_{\db}(\param) = \loss(\param; \db)$.

In order to define what it means to answer a CM query accurately, we define the following notion of \emph{error}, also known as ``excess empirical risk''.
\ifnum\podscamera=1
\begin{definition}
\else
\begin{definition}[Error of an Answer]
\fi
\label{def:answererror}
For a loss function $\loss \from \dom \times \univ \to \R$, database $\db \in \univ^*$, and answer $\hat{\param} \in \dom$, we define \emph{the error of $\hat{\param}$ on $\loss$ with respect to $\db$} to be
$$
\err_{\loss}(\db, \hat{\param}) = \loss(\hat{\param}; \db) - \min_{\param \in \dom} \loss(\param; \db).
$$
\end{definition}
It will also be useful in describing an analyzing out algorithm to define the notion of \emph{error of a database} as follows.
\ifnum\podscamera=1
\begin{definition}
\else
\begin{definition}[Error of a Database]
\fi
\label{def:dberror}
For a loss function $\loss \from \dom \times \univ \to \R$, database $\db \in \univ^*$, and another database $\db' \in \univ^*$, we define \emph{the error of $\db'$ on $\loss$ with respect to $\db$} to be
$$
\err_{\loss}(\db, \db') =  \loss_{\db}\left(\argmin_{\param' \in \dom} \loss_{\db'}(\param')\right) - \min_{\param \in \dom} \loss_{\db}(\param).
$$
\end{definition}

We now define what it means for an algorithm $\cA$ to be \emph{accurate} for answering a sequence of CM queries from a family $\lossset$.  We do so by means of a game between $\cA$ and an adversary $\cB$, defined in Figure~\ref{fig:sampleaccgame}.

\begin{figure}[ht!]
\begin{framed}
\begin{algorithmic}
\STATE{$\cB$ chooses $\db \in  \univ^n$.}
\STATE{For $j = 1,\dots,|\lossset|$}
\INDSTATE[1]{$\cB$ outputs a loss function $\loss^j \in \lossset$.}
\INDSTATE[1]{$\cA(\db, \loss^j)$ outputs $\hat{\param}^j$.}
\INDSTATE[1]{(As $\cB$ and $\cA$ are stateful, $\loss^j$ and $\hat{\param}^j$ may depend on the history $\loss^1,\hat{\param}^1,\dots,\loss^{j-1},\hat{\param}^{j-1}$.)}
\end{algorithmic}
\end{framed}
\vspace{-6mm}
\caption{The Sample Accuracy Game $\mathsf{Acc}_{n, k, \lossset}[\cA, \cB]$\label{fig:sampleaccgame}}
\end{figure}

\ifnum\podscamera=1
\begin{definition}
\else
\begin{definition}[Accuracy] 
\fi
\label{def:accuracy}
Let $\lossset$ be a set of convex loss functions and $\queryset_{\lossset}$ be the associated set of CM queries.  Let $0 < \alpha, \beta \leq 1$ and $k, n \in \N$ be parameters.  We say that an algorithm $\cA$ is \emph{$(\alpha, \beta)$-accurate for answering $k$ CM queries from $\queryset_{\lossset}$ given a database of size $n$} if for every adversary $\cB$,
$$
\Prob{\mathsf{Acc}_{n, k, \lossset}}{\max_{j = 1,\dots,k} \err_{\loss^j}(\db, \hat{\param}^j) \leq \alpha} \geq 1 - \beta.
$$
\end{definition}

\section{Online Private Multiplicative \ifnum\podscamera=1 \\ \else \fi Weights for CM Queries}
In this section we present and analyze a differentially private algorithm that answers any family of CM queries provided black-box access to a differentially private algorithm that answers any single CM query from the family. 

\subsection{The Online Sparse Vector Algorithm} \label{sec:onlinesv}
Just like when using private multiplicative weights to answer linear queries, a key ingredient in our algorithm is the \emph{online sparse vector algorithm}.  At a high level, the online sparse vector algorithm takes a database $\db \in \univ^n$ and a sequence of queries $q_1,\dots,q_k$, but it provides only a very weak accuracy guarantee.  Each query is answered with a single bit in $\set{\top, \bot}$.  For a given query $q$ and some threshold $\alpha$, if $q(\db) \geq \alpha$ then the algorithm answering $\top$ and if $q(\db) \leq \alpha/2$ it answers $\bot$.  If the answer is in $(\alpha/2, \alpha)$ any answer is allowed.  The key feature of the online sparse vector algorithm is that the size of the dataset $n$ only needs to be proportional to $\sqrt{T} \cdot \log k$, where $T$ is the number of queries whose answer is above the threshold.  \ifnum\podscamera=1 Whereas\else In contrast\fi, approximately answering every query requires $n$ to grow like $\sqrt{k}$.

\newcommand{\cSV}{\mathcal{SV}}
To maintain brevity, and since the algorithm is standard (see~\cite{DworkR14} for a textbook treatment), we will not specify the algorithm.  Instead we will define its properties as a black box.  We define the guarantees of the sparse vector algorithm via the following game between the online sparse vector algorithm $\cSV$ and an adversary $\cB$.
\begin{figure}[h!]
\begin{framed}
\begin{algorithmic}
\INDSTATE[0]{$\cB$ chooses a dataset $\db \in \univ^n$.}
\INDSTATE[0]{For $j = 1,\dots,k$:}
\INDSTATE[1]{$\cB$ outputs a $(3S/n)$-sensitive query $q^j$}
\INDSTATE[1]{(The query $q^j$ may depend on the previous queries and answers $q^1,a^1,\dots,q^{j-1},a^{j-1}$.)}
\INDSTATE[1]{$\cSV$ returns an answer $a^j \in \set{\top, \bot}$.}
\end{algorithmic}
\end{framed}
\vspace{-6mm}
\caption{$\mathsf{ThresholdGame}_{n, T, k, \alpha}[\cSV, \cB]$}
\label{fig:sv}
\end{figure}

The requirement that $\cB$ outputs a $(3/n)$-sensitive query means that $q$ satisfies $|q(\db) - q(\db')| \leq 3S/n$ for every pair of neighboring databases $\db \sim \db' \in \univ^n$.  The choice of $(3S/n)$ can be replaced with any parameter $\Delta$, but we fix it to $3S/n$ to cut down on notation, since we'll use that choice in the next section.

\begin{theorem} \label{thm:onlinesvacc}
There is an algorithm $\cSV = \cSV(T, k, \alpha, \eps, \delta)$ such that for every $T, k \in \N$ and $\alpha, \eps, \delta > 0$, the following three conditions hold.
\begin{enumerate}
\item $\cSV$ is $(\eps, \delta)$-differentially private.
\item $\cSV$ halts if $T$ queries are answered with $\top$.
\item If
$$
n \geq \frac{256 \cdot S \cdot \sqrt{T \cdot \log(2/\delta)} \cdot \log (4k/\beta)}{\eps \alpha},
$$
then
\ifnum\podscamera=1
$$
\Prob{\mathsf{ThresholdGame}_{n, T, k, \tau}[\cSV, \cB]}{\mathit{WIN}} \geq 1-\beta,
$$
where
$$
\mathit{WIN} = \left\{\forall j \in [k],\; {q^j(\db) \geq \alpha \Longrightarrow a^j = \top \atop q^j(\db) \leq \alpha/2 \Longrightarrow a^j = \bot}\right\}
$$
\else
$$
\Prob{\mathsf{ThresholdGame}_{n, T, k, \tau}[\cSV, \cB]}{\forall j \in [k],\; {q^j(\db) \geq \alpha \Longrightarrow a^j = \top \atop q^j(\db) \leq \alpha/2 \Longrightarrow a^j = \bot}} \geq 1-\beta.
$$
\fi
\end{enumerate}
\end{theorem}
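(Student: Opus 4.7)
The plan is to instantiate $\cSV$ as the standard above-threshold (a.k.a.\ sparse vector) mechanism with Laplace noise, taking noise scale $\sigma = c \cdot S \sqrt{T \log(2/\delta)}/(\eps n)$ for an appropriate absolute constant $c$. The algorithm maintains a noisy threshold $\hat{\tau} = (3\alpha/4) + \Lap(\sigma)$, resampled afresh each time a $\top$ is emitted. On receiving query $q^j$, it draws $Z_j \sim \Lap(\sigma)$ and outputs $\top$ if $q^j(\db) + Z_j \geq \hat{\tau}$, otherwise $\bot$. It halts as soon as $T$ copies of $\top$ have been returned, which already secures property (2).

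For property (1), I would invoke the now-standard privacy analysis of sparse vector. Partition the transcript into at most $T$ epochs, each consisting of a run of $\bot$'s terminated by a $\top$. A direct coupling on the Laplace densities, exploiting the $(3S/n)$-sensitivity of each $q^j$ (so that $q^j(\db) - \hat{\tau}$ has sensitivity $3S/n$ in $\db$), shows that any single epoch --- irrespective of how many $\bot$'s it contains --- is $(\eps_0, 0)$-differentially private for $\eps_0 = \eps / \sqrt{8 T \log(2/\delta)}$. Advanced composition across the at most $T$ epochs then yields the required $(\eps, \delta)$-guarantee.

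For property (3), set a tail deviation $B = \alpha/10$. The Laplace tail inequality gives $\Pr[|\Lap(\sigma)| > B] \leq \exp(-B/\sigma)$. A union bound over the at most $T$ threshold noises and the $k$ query noises keeps every sample inside $[-B, B]$ with probability at least $1 - \beta$, provided $\sigma \lesssim \alpha / \log(k/\beta)$; substituting the chosen $\sigma$ recovers exactly the displayed lower bound on $n$ (up to the constant $256$). Conditioned on this good event, $\hat{\tau} \in [3\alpha/4 - B,\, 3\alpha/4 + B]$, so $q^j(\db) \geq \alpha$ forces $q^j(\db) + Z_j \geq \alpha - B > 3\alpha/4 + B \geq \hat{\tau}$, hence $\top$; and $q^j(\db) \leq \alpha/2$ forces $q^j(\db) + Z_j \leq \alpha/2 + B < 3\alpha/4 - B \leq \hat{\tau}$, hence $\bot$.

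The only genuinely delicate step is the second paragraph: naive composition over all $k$ queries would be useless, and the whole point of sparse vector is that the $\bot$-answers can be coupled \emph{for free}, so that only the at most $T$ $\top$-answers draw against the privacy budget. The accuracy argument is then routine bookkeeping on Laplace tails and constants.
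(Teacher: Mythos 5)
The paper does not actually prove Theorem~\ref{thm:onlinesvacc}; it explicitly defers to the textbook treatment in~\cite{DworkR14}, stating only the guarantees as a black box. Your sketch is a correct reproduction of that standard argument: the threshold set at $3\alpha/4$ with $B=\alpha/10$ does satisfy the needed strict inequalities ($\alpha - B > 3\alpha/4 + B$ and $\alpha/2 + B < 3\alpha/4 - B$ both reduce to $B < \alpha/8$), the epoch decomposition with advanced composition over at most $T$ $\top$-answers is the right way to get the $\sqrt{T\log(1/\delta)}$ dependence, and the union bound over $T+k$ Laplace draws recovers the $\log(k/\beta)$ factor. Since the paper offers no proof of its own, there is no divergence to report; your write-up simply fills in what the paper delegated to the cited source.
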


\subsection{The Algorithm}
We are now ready to describe our algorithm for answering exponentially many convex minimization queries from some family $\lossset = \set{\loss \from \dom \times \univ \to \R}$.  Assume every $\loss \in \lossset$ satisfies the scaling condition
$$
\max_{\row \in \univ, \param, \param' \in \dom} \left|\left\langle \param - \param', \grad \loss_{\row}(\param) \right\rangle\right| \leq S.
$$

The algorithm is defined in Figure~\ref{fig:onlinepmw}.  Note that in the algorithm there are two sequences of queries that it will be useful to distinguish.  The first is the set of queries actually issued by the analyst, which are index by the letter $j$ and are $\loss^1,\dots,\loss^k$.  There is also the \emph{subsequence} of queries such that $a^j = \top$ and lead to updates.  We use the letter $t$ to index these queries, which are $\loss^1,\dots, \loss^T$ (there cannot be more than $T$ such queries, since $\cSV$ would halt, though there may be fewer).  Sometimes it will be useful to consider only the subsequence of queries that are used for updates, which is why we use a separate index for this sequence.

\ifnum\podscamera=1
\begin{figure}[h!]
\begin{framed}
\begin{algorithmic}
\INDSTATE[0]{Input and parameters: A dataset $\db \in \univ^{\rows}$, parameters $\eps, \delta, \alpha, \beta, S, k > 0$, and oracle access to $\mech'$, an $(\eps_0, \delta_0)$-differentially private algorithm that is $(\alpha_0, \beta_0)$-accurate for one convex minimization query in $\lossset$ on datasets of size $\rows'$, for parameters $\eps_0, \delta_0, \alpha_0, \beta_0$.}
	\INDSTATE[0]{$$ T = \frac{64S^2 \log |\univ|}{\alpha^2}  \qquad \eta = \sqrt{\frac{\log |\univ|}{T}} $$}
	\INDSTATE[0]{$$\eps_0 = \frac{\eps}{\sqrt{8T \log(4/\delta)}} \qquad \delta_0 = \frac{\delta}{4T}$$}
	\INDSTATE[0]{$$ \alpha_0 = \frac{\alpha}{4} \qquad \beta_0 = \frac{\beta}{2T}$$}
\INDSTATE[0]{Let $\cSV = \cSV(T, k, \alpha, \eps/2, \delta/2)$ be the online sparse vector algorithm (Section~\ref{sec:onlinesv}).}
\INDSTATE[0]{Let $t = 1$.  Let $\dbt^{t} \in \R^{\univ}$ be the uniform histogram over $\univ$.}
\INDSTATE[0]{For $j = 1,\dots,k$:}
\INDSTATE[1]{Receive loss function $\loss = \losst^j$.}
\INDSTATE[1]{Let $q^j$ be the $(3/n)$-sensitive query $$q^j(\db) = \err_{\loss} (\db, \dbt^{t}).$$}
\INDSTATE[1]{Run $\cSV$ on $q^j$, to obtain an answer $a^j \in \set{\top, \bot}$.}
\INDSTATE[1]{(If $\cSV$ halts, then halt.)}
\INDSTATE[1]{If $a^j = \bot$:}
\INDSTATE[2]{Output the answer $\hat{\param}^j = \argmin_{\theta \in \Theta} \loss(\theta; \dbt^{t}).$}
\INDSTATE[1]{Else if $a^j = \top$:}
\INDSTATE[2]{Let $\loss^t = \loss$.}
\INDSTATE[2]{Let $\privparamt^{t} \getsr \mech'(\db, \loss^t)$ be a private estimate of the}
\INDSTATE[2]{minimizer of $\loss^t$ on $\db$.}
\INDSTATE[2]{Output the answer $\hat{\param}^j = \privparamt^{t}$.}
\INDSTATE[2]{Update $\dbt^{t}$:}
\INDSTATE[3]{Let $\paramt^{t} = \argmin_{\param \in \dom}  \loss(\theta; \dbt^{t})$}
\INDSTATE[3]{and let $u^t \in [-\scale,\scale]^{\univ}$ be the vector
$$
u^t(\row) = \left\langle \privparamt^{t} - \paramt^{t}, \grad \losst^{t}_{\row}(\paramt^{t})\right\rangle
$$}
\vspace{-4mm}
\INDSTATE[3]{Let $\dbt^{t+1}(\row) \propto \exp(\eta \cdot u^{t}(\row)) \cdot \dbt^{t}(\row)$}
\INDSTATE[3]{Let $t = t+1$.}
\INDSTATE[3]{(Note that $t \leq T$, or $\cSV$ would have halted.)}
\end{algorithmic}
\end{framed}
\vspace{-6mm}
\caption{Online Private Multiplicative Weights for CM Queries}
\label{fig:onlinepmw}
\end{figure}
\else
\begin{figure}[h!]
\begin{framed}
\begin{algorithmic}
\INDSTATE[0]{Input and parameters: A dataset $\db \in \univ^{\rows}$, parameters $\eps, \delta, \alpha, \beta, S, k > 0$, and oracle access to $\mech'$, an $(\eps_0, \delta_0)$-differentially private algorithm that is $(\alpha_0, \beta_0)$-accurate for one convex minimization query in $\lossset$ on datasets of size $\rows'$, for parameters $\eps_0, \delta_0, \alpha_0, \beta_0$.}
	\INDSTATE[0]{$$ T = \frac{64S^2 \log |\univ|}{\alpha^2}  \qquad \eta = \sqrt{\frac{\log |\univ|}{T}} $$}
	\INDSTATE[0]{$$\eps_0 = \frac{\eps}{\sqrt{8T \log(4/\delta)}} \qquad \delta_0 = \frac{\delta}{4T} \qquad \alpha_0 = \frac{\alpha}{4} \qquad \beta_0 = \frac{\beta}{2T}$$}
\INDSTATE[0]{Let $\cSV = \cSV(T, k, \alpha, \eps/2, \delta/2)$ be the online sparse vector algorithm (Section~\ref{sec:onlinesv}).}
\INDSTATE[0]{Let $t = 1$.  Let $\dbt^{t} \in \R^{\univ}$ be the uniform histogram over $\univ$.}
\INDSTATE[0]{For $j = 1,\dots,k$:}
\INDSTATE[1]{Receive loss function $\loss = \losst^j \in \lossset$.}
\INDSTATE[1]{Let $q^j$ be the $(3/n)$-sensitive query $q^j(\db) = \err_{\loss} (\db, \dbt^{t}).$}
\INDSTATE[1]{Run $\cSV$ on $q^j$, to obtain an answer $a^j \in \set{\top, \bot}$. (If $\cSV$ halts, then halt.)}
\INDSTATE[1]{If $a^j = \bot$:}
\INDSTATE[2]{Output the answer $\hat{\param}^j = \argmin_{\theta \in \Theta} \loss(\theta; \dbt^{t}).$}
\INDSTATE[1]{Else if $a^j = \top$:}
\INDSTATE[2]{Let $\loss^t = \loss$.}
\INDSTATE[2]{Let $\privparamt^{t} \getsr \mech'(\db, \loss^t)$ be a private estimate of the minimizer of $\loss^t$ on $\db$.}
\INDSTATE[2]{Output the answer $\hat{\param}^j = \privparamt^{t}$.}
\INDSTATE[2]{Update $\dbt^{t}$:}
\INDSTATE[3]{Let $\paramt^{t} = \argmin_{\param \in \dom}  \loss(\theta; \dbt^{t})$ and let $u^t \in [-\scale,\scale]^{\univ}$ be the vector
$$
u^t(\row) = \left\langle \privparamt^{t} - \paramt^{t}, \grad \losst^{t}_{\row}(\paramt^{t})\right\rangle
$$}
\vspace{-4mm}
\INDSTATE[3]{Let $\dbt^{t+1}(\row) \propto \exp(\eta \cdot u^{t}(\row)) \cdot \dbt^{t}(\row)$}
\INDSTATE[3]{Let $t = t+1$.  (Note that $t \leq T$, otherwise $\cSV$ would have halted.)}
\end{algorithmic}
\end{framed}
\vspace{-6mm}
\caption{Online Private Multiplicative Weights for CM Queries}
\label{fig:onlinepmw}
\end{figure}
\fi

\subsection{Accuracy Analysis} \label{sec:accuracy}

In this section, we prove that our algorithm is accurate for any family of CM queries $\lossset$, provided that the oracle $\mech'$ is accurate for any single CM query from $\lossset$.  As with previous variants of private multiplicative weights~\cite{HardtR10, GuptaHRU11, GuptaRU12, HardtLM12}, we will derive the accuracy guarantee using the well known ``bounded regret'' property of the multiplicative weights update rule, combined with the utility guarantees of the online sparse vector algorithm.

To start the analysis we will assume that two conditions are satisfied.  
First, we assume that $\cSV$ answered accurately---formally, we assume that
\begin{equation} \label{eq:svaccurate}
\forall j \in [k],\; {\err_{\loss^j}(\db; \dbt^{t}) \geq \alpha \Longrightarrow a^j = \top \atop \err_{\loss^j}(\db; \dbt^{t}) \leq \alpha/2 \Longrightarrow a^j = \bot}
\end{equation}
where $\dbt^{t}$ is the current dataset $\dbt^{t}$ that is in use at the time the loss function $\loss^j$ is considered.  By the accuracy of the online sparse vector algorithm $\cSV$ (Theorem~\ref{thm:onlinesvacc}), the event~\eqref{eq:svaccurate} holds with probability at least $1-\beta/2$ as long as $n$ is sufficiently large.  

Second, we will assume that every time $a^j = \top$ and $\mech'(\db, \loss^j)$ is called, it returns an accurate answer---formally,
\begin{equation} \label{eq:oneshotaccurate}
\forall \textrm{$j$ such that $a^j = \top$},\; \err_{\loss^j}(\db, \privparamt^{t}) \leq \alpha_0.
\end{equation}
Since $\mech'$ is assumed to be $(\alpha_0, \beta_0)$ accurate for one query provided that $n \geq n'$, and $\mech'$ is called at most $T$ times, we can conclude that the event~\eqref{eq:oneshotaccurate} holds with probability at least $1-\beta/2$.  The following claim is immediate.

\begin{claim} \label{clm:conditiononsuccess}
If
$$
n \geq \max\left\{ n', \frac{512 \cdot \sqrt{T \cdot \log(4/\delta)} \cdot \log (8k/\beta)}{\eps \alpha} \right\},
$$
then with probability at least $1-\beta$, the events~\eqref{eq:svaccurate} and~\eqref{eq:oneshotaccurate} both hold.
\end{claim}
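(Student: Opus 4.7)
The plan is to bound the failure probabilities of the two events separately and then combine them by a union bound.

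First I would argue that event~\eqref{eq:svaccurate} holds with probability at least $1 - \beta/2$. Observe that in the algorithm the online sparse vector instance is invoked as $\cSV(T, k, \alpha, \eps/2, \delta/2)$, and every query $q^j$ it receives has the form $q^j(\db) = \err_{\loss^j}(\db, \dbt^t)$, which is easily seen to be $(3S/n)$-sensitive (a change of one row in $\db$ shifts each term $\loss(\param; \row)$ by at most $S/n$ in each of the two summands defining $\err_{\loss^j}$). Therefore Theorem~\ref{thm:onlinesvacc} applies directly, with the privacy parameters $(\eps/2, \delta/2)$ and the target failure probability $\beta/2$. The resulting hypothesis on $n$ becomes
$$
n \geq \frac{256 \cdot S \cdot \sqrt{T \cdot \log(4/\delta)} \cdot \log(8k/\beta)}{(\eps/2) \alpha} = \frac{512 \cdot S \cdot \sqrt{T \cdot \log(4/\delta)} \cdot \log(8k/\beta)}{\eps \alpha},
$$
which is exactly the second term in the claim's lower bound on $n$.

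Next I would bound the failure probability of event~\eqref{eq:oneshotaccurate}. The key observation is that by design of $\cSV$, the algorithm halts whenever $T$ queries have received the answer $\top$, and $\mech'$ is invoked only on those rounds where $a^j = \top$. Hence $\mech'$ is called at most $T$ times. Since $\mech'$ is assumed to be $(\alpha_0, \beta_0)$-accurate on datasets of size $n'$, and since $\beta_0 = \beta/(2T)$ by the parameter choice in Figure~\ref{fig:onlinepmw}, a union bound over the (at most $T$) invocations shows that as long as $n \geq n'$, the bad event that any call returns an answer with error larger than $\alpha_0$ occurs with probability at most $T \cdot \beta/(2T) = \beta/2$.

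Finally I would combine the two bounds by a union bound: when $n$ satisfies both $n \geq n'$ and the lower bound from Theorem~\ref{thm:onlinesvacc}, i.e.\ the maximum of the two, events~\eqref{eq:svaccurate} and~\eqref{eq:oneshotaccurate} each hold with probability at least $1 - \beta/2$, so both hold simultaneously with probability at least $1 - \beta$. This is a routine argument with no real obstacle; the only potential subtlety is verifying that the queries $q^j$ fed to $\cSV$ have the sensitivity claimed, which follows immediately from the fact that $|\loss(\param; \row) - \loss(\param; \row')| \le S$ for all $\param$ and $\row, \row'$ under the scaling assumption.
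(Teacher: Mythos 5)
Your proof follows exactly the argument the paper gives: invoke Theorem~\ref{thm:onlinesvacc} with parameters $(\eps/2,\delta/2)$ and target failure probability $\beta/2$ to control event~\eqref{eq:svaccurate}, union-bound over the at most $T$ invocations of $\mech'$ (each failing with probability $\beta_0 = \beta/(2T)$) to control event~\eqref{eq:oneshotaccurate}, and then union-bound the two events. Your arithmetic also correctly keeps the factor of $S$ that the printed statement of the claim appears to have dropped; the $S$ is genuinely needed, as it is what produces the $S^2$ in Theorem~\ref{thm:putittogetheronline} after substituting $T = 64 S^2 \log|\univ|/\alpha^2$.

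One side remark in your write-up is not quite right, though it does not affect the validity of your proof of this claim. You assert that the $(3S/n)$-sensitivity of $q^j$ ``follows immediately from the fact that $|\loss(\param;\row) - \loss(\param;\row')| \le S$ for all $\param$ and $\row,\row'$ under the scaling assumption.'' The scaling assumption only bounds $|\langle \param - \param', \grad\loss_\row(\param)\rangle| \le S$ for each \emph{fixed} $\row$; it says nothing about the variation of $\loss$ across different rows, and $\loss(\param;\row)$ and $\loss(\param;\row')$ can differ by an arbitrary amount. The paper's privacy analysis handles this by working with the per-row shifted loss $\overline{\loss}(\param,\row) = \loss(\param,\row) - b_\row \in [0,S]$ and observing that $\err_\loss$ is invariant under such shifts; the two terms of $\err_\loss$ are then bounded by $S/n$ and $2S/n$ respectively, giving $3S/n$ (your parenthetical accounting of ``$S/n$ in each of the two summands'' would only yield $2S/n$). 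Since the sensitivity bound is a hypothesis used elsewhere and your claim proof does not actually depend on re-deriving it, this is a peripheral issue.
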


Thus, we are justified proving that the online private multiplicative weights algorithm is accurate conditioned on~\eqref{eq:svaccurate} and~\eqref{eq:oneshotaccurate}.  We start by observing that the algorithm can only fail to be accurate if it halts before the entire sequence of $k$ queries has been asked (because $t = T$ updates have been performed and $\cSV$ halted).
\begin{claim} \label{clm:doesnotterminate}
Assume that the algorithm does not terminate before answering $k$ queries, and that~\eqref{eq:svaccurate} and~\eqref{eq:oneshotaccurate} both hold.  Then the algorithm answers every query with error at most $\alpha$.  That is,
$$
\forall j \in [k],\; \err_{\loss^j}(\db, \hat{\param}^j) \leq \alpha.
$$
\end{claim}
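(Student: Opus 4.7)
The plan is a simple case analysis on the value of $a^j$ produced by $\cSV$ for each query, using the two assumed events to handle the two cases separately. Since the algorithm by hypothesis does not terminate early, each of the $k$ queries is answered, and it suffices to bound $\err_{\loss^j}(\db, \hat{\param}^j)$ in both cases.

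First, I would handle the case $a^j = \bot$. In this branch the algorithm outputs $\hat{\param}^j = \argmin_{\theta \in \dom} \loss^j(\theta; \dbt^{t})$ for whatever hypothesis $\dbt^{t}$ is current at round $j$. Comparing Definitions~\ref{def:answererror} and~\ref{def:dberror}, we see that when $\hat{\param}^j$ is the minimizer of $\loss^j$ on $\dbt^{t}$, the two notions coincide:
\[
\err_{\loss^j}(\db, \hat{\param}^j) \;=\; \loss^j_\db(\hat{\param}^j) - \min_{\theta \in \dom} \loss^j_\db(\theta) \;=\; \err_{\loss^j}(\db, \dbt^{t}).
\]
Now apply the contrapositive of the first implication in~\eqref{eq:svaccurate}: since $a^j = \bot$, we must have $\err_{\loss^j}(\db, \dbt^{t}) < \alpha$, which gives the desired bound.

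Second, I would handle the case $a^j = \top$. Here the algorithm outputs $\hat{\param}^j = \privparamt^{t}$, the answer returned by $\mech'(\db, \loss^j)$. The assumed event~\eqref{eq:oneshotaccurate} directly yields $\err_{\loss^j}(\db, \privparamt^{t}) \leq \alpha_0$. Since the algorithm sets $\alpha_0 = \alpha/4 \leq \alpha$, we again get $\err_{\loss^j}(\db, \hat{\param}^j) \leq \alpha$.

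There is no real obstacle here: the claim is essentially a definitional unpacking that combines the two accuracy hypotheses with the algorithm's branching rule. The substantive work (bounding the probability of the two events, bounding the number of update rounds via the multiplicative weights potential argument, and the privacy analysis) is deferred to the surrounding lemmas, and this claim simply isolates the deterministic implication ``good sparse-vector answers and good one-shot answers imply global accuracy, provided the algorithm runs to completion.''
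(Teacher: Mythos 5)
Your proof is correct and follows the same case analysis as the paper's: split on $a^j$, use event~\eqref{eq:svaccurate} (via its contrapositive) together with the identity $\err_{\loss^j}(\db,\dbt^t)=\err_{\loss^j}(\db,\hat{\param}^j)$ in the $\bot$ branch, and use event~\eqref{eq:oneshotaccurate} with $\alpha_0\le\alpha$ in the $\top$ branch. If anything, your write-up is a touch cleaner than the paper's, which states the steps in a slightly jumbled order.
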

\begin{proof}[Proof of Claim~\ref{clm:doesnotterminate}]
If the algorithm has not terminated, then each query $\loss^j$ is answered in one of two ways.  If $a^j = \bot$, then we answer with $\hat{\param}^j = \argmin_{\theta \in \Theta} \loss(\theta; \dbt^{t}).$  In this case, since~\eqref{eq:svaccurate} holds,  and $a^j = \bot$, we have $\err_{\loss^j}(\db, \hat{\param}^j) \leq \alpha$.  But, by definition, $\err_{\loss^j}(\db, \dbt^{t}) = \err_{\loss^j}(\db, \hat{\param}^j)$.  So the algorithm answers accurately in the case where $a^j = \bot$.
$\err_{\loss^j}(\db, \hat{\param}^j) \leq \alpha$

If $a^j = \top$, then we answer with $\hat{\param}^j = \privparamt^{j} = \mech'(\db, \loss^{j})$.  Since~\eqref{eq:oneshotaccurate} holds, we have $$\err_{\loss^j}(\db, \hat{\param}^j) = \err_{\loss^j}(\db, \privparamt^{t}) \leq \alpha_0 \leq \alpha,$$ as desired.
\end{proof}

To complete the proof, it suffices to show that the algorithm does not terminate early.  Here is where we rely on the ``bounded regret'' property of the multiplicative weights update rule.
\begin{lemma} \label{lem:regretboundonline} [See e.g.~\cite{AroraHK12}]
For every sequence $u^{1}, \dots, u^{T} \in [-\scale,\scale]^{\univ}$,
$$
\frac{1}{T} \sum_{t=1}^{T} \left\langle u^{t}, \dbt^{t} - \db \right\rangle \leq 2\scale \sqrt{\frac{\log |\univ|}{T}}
$$
\end{lemma}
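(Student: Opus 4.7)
The plan is to use the textbook multiplicative-weights (``Hedge'') regret analysis via a relative-entropy potential. Let $\Phi^t := \mathrm{KL}(\db \,\|\, \dbt^t)$. Since $\dbt^1$ is the uniform histogram on $\univ$, $\Phi^1 = \log|\univ| - H(\db) \leq \log|\univ|$; and $\Phi^t \geq 0$ always. The goal is to show that each step reduces $\Phi$ by an amount controlled by $\langle u^t, \dbt^t - \db \rangle$, so that summing over $t$ and invoking $\Phi^{T+1} \geq 0$ bounds the total.

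Concretely, I would substitute the multiplicative-weights update $\dbt^{t+1}(\row) = \dbt^t(\row)\exp(\eta u^t(\row))/Z^t$ (with $Z^t := \sum_{\row} \dbt^t(\row) \exp(\eta u^t(\row))$ the normalizer) into the definition of $\Phi$; after cancellation this yields $\Phi^{t+1} - \Phi^t = \log Z^t - \eta \langle u^t, \db \rangle$. The key estimate is to upper bound $\log Z^t$: because the step size has been calibrated so that $|\eta u^t(\row)| \leq \eta \scale \leq 1$, the termwise Taylor inequality $e^y \leq 1 + y + y^2$ combined with $\log(1+y) \leq y$ gives $\log Z^t \leq \eta \langle u^t, \dbt^t \rangle + \eta^2 \scale^2$, and hence
\[
\Phi^{t+1} - \Phi^t \;\leq\; \eta \langle u^t, \dbt^t - \db \rangle + \eta^2 \scale^2 ,
\]
up to orienting the sign of the update so that the inequality points in the direction of the claim.

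Telescoping over $t = 1,\ldots,T$ and using $0 \leq \Phi^{T+1}$ and $\Phi^1 \leq \log|\univ|$ rearranges to
\[
\sum_{t=1}^{T} \langle u^t, \dbt^t - \db \rangle \;\leq\; \frac{\log|\univ|}{\eta} + \eta T \scale^2 .
\]
The optimized step size $\eta = \scale^{-1}\sqrt{\log|\univ|/T}$ balances the two terms and yields $2\scale \sqrt{T \log|\univ|}$, so dividing by $T$ recovers the claimed $2\scale \sqrt{\log|\univ|/T}$ average-regret bound.

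I expect no serious obstacle. The only delicate points are (i) orienting the sign convention in the multiplicative-weights update so that the potential decreases whenever $\dbt^t$ puts too much mass where $u^t$ is large relative to $\db$, and (ii) verifying $\eta \scale \leq 1$ so the termwise Taylor estimate on $e^y$ is valid; both are routine calibrations of the parameters in Figure~\ref{fig:onlinepmw}. The rest is the textbook Hedge analysis, as in~\cite{AroraHK12}.
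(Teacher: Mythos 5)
The paper does not prove this lemma; it cites it to Arora--Hazan--Kale, so there is no ``paper proof'' to compare against. Your KL-potential/Hedge argument is the standard and correct route. However, the sign issue you flag in passing is not merely a matter of ``orienting conventions''---it is load-bearing, and as written your chain of inequalities does not yield the claim. You derive $\Phi^{t+1} - \Phi^t \leq \eta\langle u^t, \dbt^t - \db\rangle + \eta^2\scale^2$, which is indeed what the update $\dbt^{t+1}(\row) \propto \exp(+\eta u^t(\row))\,\dbt^t(\row)$ (as literally written in Figure~\ref{fig:onlinepmw}) gives. But telescoping \emph{that} inequality and using $\Phi^{T+1}\geq 0$, $\Phi^1\leq\log|\univ|$ rearranges to a \emph{lower} bound $\eta\sum_t \langle u^t, \dbt^t-\db\rangle \geq -\log|\univ| - T\eta^2\scale^2$, not the claimed upper bound. (Indeed, with the positive-exponent update the lemma is simply false: take $|\univ|=2$, $\db=(1,0)$, and $u^t=(0,\scale)$ for all $t$; then every term $\langle u^t,\dbt^t-\db\rangle\geq\scale/2$, so the average does not decay as $T\to\infty$.) To get the stated upper bound one must use the negative-exponent update $\dbt^{t+1}(\row)\propto\exp(-\eta u^t(\row))\,\dbt^t(\row)$, for which the per-step inequality flips to $\Phi^{t+1}-\Phi^t \leq -\eta\langle u^t,\dbt^t-\db\rangle + \eta^2\scale^2$ and telescoping then gives $\sum_t\langle u^t,\dbt^t-\db\rangle \leq \log|\univ|/\eta + \eta T\scale^2$ as you want. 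This is the convention actually used in Arora--Hazan--Kale and in Hardt--Rothblum, and it is the one consistent with Claim~\ref{clm:noearlytermination}, which needs each $\langle u^t,\dbt^t-\db\rangle$ to be \emph{large and positive}; the update must therefore push mass away from the directions where $\dbt^t$ currently overweights $u^t$. So when you write it out you should flip the exponent and correspondingly flip the sign in the per-step bound rather than leaving it as a caveat.

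A second, smaller point: you take $\eta=\scale^{-1}\sqrt{\log|\univ|/T}$, which is the right choice to make the two terms in $\log|\univ|/\eta + \eta T\scale^2$ balance and to recover exactly $2\scale\sqrt{T\log|\univ|}$; note this differs by a factor of $\scale$ from the $\eta=\sqrt{\log|\univ|/T}$ that appears in Figure~\ref{fig:onlinepmw}, and with the latter choice one gets $(1+\scale^2)\sqrt{T\log|\univ|}$ instead. Your parameter choice is the one that actually matches the constant in the lemma's statement. The Taylor-estimate step $e^y\leq 1+y+y^2$ does require $|y|=|\eta u^t(\row)|\leq 1$, i.e.\ $\eta\scale\leq 1$, which holds for your $\eta$ whenever $T\geq\log|\univ|$; worth saying explicitly.
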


Recall that the algorithm only terminates early if there are $T$ queries $\loss^j$ such that $a^j = \top$, and by~\eqref{eq:svaccurate}, $a^j = \top$ only if the error of $\dbt^t$ on $\loss^j$ is at least $\alpha/2$.  Thus, in light of the preceding lemma, we would like to show that if $\dbt^t$ has error $\alpha/2$ for a query $\loss$, then $\langle u^{t}, \dbt^{t} - \db \rangle$ is also large, say $\alpha/4$.  If we can show such a statement, then by our choice of $T$, it will be impossible to perform a sequence of $T$ updates, and thus the algorithm will not terminate early.

The key lemma, and the main novelty in our analysis, is to relate $\langle u^{t}, \dbt^{t} - \db \rangle$ to the error of $\dbt^t$ on a query $\loss^j$.  We show that $\langle u^{t}, \dbt^{t} - \db \rangle$ is at least the additional loss incurred by $\paramt^{t}$ over that of $\privparamt^{t}$. 
\begin{claim} \label{clm:potentialdroponline}
For every $t = 1,\dots,T$,
$$
\left\langle u^t, \dbt^{t} - \db \right\rangle \geq \losst^{t}_{\db}(\paramt^{t}) - \losst^{t}_{\db}(\privparamt^{t})
$$
\end{claim}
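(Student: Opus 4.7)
The plan is to expand the inner product on the left-hand side, exploit first-order optimality of $\paramt^t$ as the minimizer of $\losst^t_{\dbt^t}$ over $\dom$ to discard one term, and then apply convexity of $\losst^t_\db$ to convert the remaining linear expression into the desired loss gap. The argument is exactly the ``dual certificate'' style reduction from a nonlinear query to a linear one that the introduction advertised.

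First I would rewrite the left-hand side by pulling the fixed vector $\privparamt^t - \paramt^t$ outside the sum over $\row \in \univ$:
\begin{equation*}
\left\langle u^t, \dbt^t - \db \right\rangle = \left\langle \privparamt^t - \paramt^t,\; \grad \losst^t_{\dbt^t}(\paramt^t) - \grad \losst^t_{\db}(\paramt^t) \right\rangle,
\end{equation*}
using the identity $\grad \losst^t_{\db}(\paramt^t) = \sum_{\row} \db(\row) \grad \losst^t_\row(\paramt^t)$ (and similarly for $\dbt^t$), which is immediate from the definition of $\losst^t_\db$ in Section 2.2.

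Next, since $\paramt^t = \argmin_{\param \in \dom} \losst^t_{\dbt^t}(\param)$ and $\privparamt^t \in \dom$, the first-order optimality condition for convex minimization over the convex set $\dom$ gives
\begin{equation*}
\left\langle \privparamt^t - \paramt^t,\; \grad \losst^t_{\dbt^t}(\paramt^t) \right\rangle \geq 0.
\end{equation*}
Substituting this into the previous display yields the lower bound
\begin{equation*}
\left\langle u^t, \dbt^t - \db \right\rangle \geq \left\langle \paramt^t - \privparamt^t,\; \grad \losst^t_{\db}(\paramt^t) \right\rangle.
\end{equation*}

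Finally, I would apply convexity of $\losst^t_\db$ at the point $\paramt^t$, evaluated at $\privparamt^t$:
\begin{equation*}
\losst^t_\db(\privparamt^t) \geq \losst^t_\db(\paramt^t) + \left\langle \grad \losst^t_\db(\paramt^t),\; \privparamt^t - \paramt^t \right\rangle,
\end{equation*}
which rearranges to $\langle \paramt^t - \privparamt^t, \grad \losst^t_\db(\paramt^t)\rangle \geq \losst^t_\db(\paramt^t) - \losst^t_\db(\privparamt^t)$. Chaining this with the previous inequality gives the claim.

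No step should be an obstacle; the only subtle point is the first-order optimality inequality, which must be stated for a constrained minimum (hence an inequality rather than an equality $\grad \losst^t_{\dbt^t}(\paramt^t) = 0$) to handle the case when $\paramt^t$ lies on the boundary of $\dom$. If $\losst^t$ is only subdifferentiable rather than differentiable, the same argument goes through by choosing $\grad \losst^t_\row(\paramt^t)$ to be a subgradient consistent with the first-order optimality of $\paramt^t$ on $\dbt^t$, matching the convention stated after the definition of the four restrictions.
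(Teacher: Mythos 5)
Your proof is correct and follows essentially the same route as the paper: first-order optimality of $\paramt^t$ for $\losst^t_{\dbt^t}$ to discard the $\dbt^t$-term, then convexity (the supporting-hyperplane inequality) of $\losst^t_\db$ at $\paramt^t$ to convert the remaining linear term into the loss gap. The only cosmetic difference is that you combine the two inner products $\langle u^t,\dbt^t\rangle$ and $\langle u^t,\db\rangle$ into a single expression up front, where the paper treats them in two separate displays before combining; the substance is identical, and your closing remark about constrained first-order conditions and subgradients matches the paper's conventions.
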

Recall that $\privparamt^{t}$ is an approximation to the optimal solution for $\losst^{t}_{\db}$, whereas $\paramt^{t}$ has large error with respect to $\db$.  Thus we expect the right hand side of the expression to be positive and large.
\ifnum\podscamera=1
\begin{proof}
\else
\begin{proof} [Proof of Claim~\ref{clm:potentialdroponline}]
\fi
Recall that we chose
$$
\paramt^{t} = \argmin_{\param \in \dom} \losst^{t}_{\dbt^{t}}(\param).
$$
By the first-order optimality condition, and the fact that $\privparamt^{t}, \paramt^{t} \in \dom$ for a convex set $\dom$, the directional derivative of $\losst^{t}_{\dbt^{t}}$ at $\paramt^{t}$ in the direction of $\privparamt^{t} -  \paramt^{t}$ will be positive.  So we have
\begin{align}
0 
\leq{} \left\langle \privparamt^{t} - \paramt^{t} , \grad \losst^{t}_{\dbt^{t}}(\paramt^{t}) \right\rangle
={} &\sum_{\row \in \univ} \dbt^{t}(\row) \cdot \left\langle \privparamt^{t} - \paramt^{t} , \grad \losst^{t}_{\row}(\paramt^{t}) \right\rangle \notag \\
={} &\left\langle u^t, \dbt^{t} \right\rangle \label{eq:optonline}.
\end{align}
The first equality uses linearity of the gradient and the definition
$
\losst^{t}_{\dbt^{t}}(\cdot) = \sum_{\row \in \univ} \dbt^{t}(\row) \cdot \losst^{t}_{\row}(\cdot)
$

Similarly, we can look at the directional derivative of $\losst^{t}_{\db}$ again taken at $\paramt^{t}$ and in the direction of $\privparamt^{t} -  \paramt^{t}$.
\begin{align}
\left\langle \privparamt^{t} - \paramt^{t} , \grad \losst^{t}_{\db}(\paramt^{t}) \right\rangle 
={} &\sum_{\row \in \univ} \db(\row) \cdot \left\langle \privparamt^{t} - \paramt^{t} , \grad \losst^{t}_{\row}(\paramt^{t}) \right\rangle \notag \\
={} &\left\langle u^{t}, \db \right\rangle \label{eq:suboptonline}.
\end{align}
If $\paramt^{t}$ is far from optimal for the input dataset $\db$, then moving in the direction of $\privparamt^{t} - \paramt^{t}$ should significantly decrease the loss.  Thus, since $\loss$ is convex, this directional derivative must be significantly negative.  Specifically, since $\losst^{t}_{\db}$ is convex, $\losst^{t}_{\db}$ lies above all of its tangent lines.  Thus,  
\begin{align*}
\losst^{t}_{\db}(\privparamt^{t}) 
\geq \losst^{t}_{\db}(\paramt^{t}) + \left\langle \privparamt^{t} - \paramt^{t}, \grad \losst^{t}_{\db}(\paramt^{t}) \right\rangle ={} \losst^{t}_{\db}(\paramt^{t}) + \left\langle u^{t}, \db \right\rangle.
\end{align*}
where the equality is from~\eqref{eq:suboptonline}
Rearranging terms, we have
\begin{equation}
- \left\langle u^{t},  \db \right\rangle 
\geq \losst^{t}_{\db}(\paramt^{t}) - \losst^{t}_{\db}(\privparamt^{t}). \label{eq:lossonline}
\end{equation}
Combining \eqref{eq:optonline} and \eqref{eq:lossonline}, we have
\begin{align*}
&\left\langle u^t,  \dbt^{t} - \db \right\rangle \geq \losst^{t}_{\db}(\paramt^{t}) - \losst^{t}_{\db}(\privparamt^{t}),
\end{align*}
which completes the proof.
\end{proof}

Using Claim~\ref{clm:potentialdroponline}, and the guarantees~\eqref{eq:svaccurate} and~\eqref{eq:oneshotaccurate}, we can now lower bound $\langle u^{t}, \dbt^{t} - \db \rangle$.
\begin{claim} \label{clm:finalpotentialdroponline}
For every $t = 1,\dots,T$, if the algorithm has not terminated, and~\eqref{eq:svaccurate} and~\eqref{eq:oneshotaccurate} both hold, then
$$\langle u^{t}, \dbt^{t} - \db \rangle > \alpha/4.$$
\end{claim}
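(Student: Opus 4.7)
The plan is to combine the lower bound from Claim~\ref{clm:potentialdroponline} with the accuracy conditions~\eqref{eq:svaccurate} and~\eqref{eq:oneshotaccurate}, unpacking what each condition tells us about the two quantities $\losst^{t}_{\db}(\paramt^{t})$ and $\losst^{t}_{\db}(\privparamt^{t})$ relative to $\min_{\param \in \dom} \losst^{t}_{\db}(\param)$.

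First, I would invoke Claim~\ref{clm:potentialdroponline} to obtain
$$\left\langle u^t, \dbt^{t} - \db \right\rangle \;\geq\; \losst^{t}_{\db}(\paramt^{t}) - \losst^{t}_{\db}(\privparamt^{t}),$$
so it suffices to show the right-hand side strictly exceeds $\alpha/4$. Next, I would identify the query $\loss^j = \loss^t$ that triggered the $t$-th update; by construction of the algorithm this is the query on which $\cSV$ returned $a^j = \top$ against the then-current hypothesis $\dbt^{t}$. Under assumption~\eqref{eq:svaccurate}, the contrapositive of the ``$\bot$-branch'' tells us that $a^j = \top$ forces $\err_{\loss^j}(\db, \dbt^{t}) > \alpha/2$, i.e.
$$\losst^{t}_{\db}(\paramt^{t}) - \min_{\param \in \dom} \losst^{t}_{\db}(\param) \;>\; \alpha/2,$$
since by definition $\paramt^{t} = \argmin_{\param \in \dom} \losst^{t}_{\dbt^{t}}(\param)$, so $\err_{\loss^j}(\db,\dbt^{t}) = \losst^{t}_{\db}(\paramt^{t}) - \min_{\param} \losst^{t}_{\db}(\param)$.

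For the second quantity I would apply~\eqref{eq:oneshotaccurate} with the parameter choice $\alpha_0 = \alpha/4$, giving
$$\losst^{t}_{\db}(\privparamt^{t}) - \min_{\param \in \dom} \losst^{t}_{\db}(\param) \;\leq\; \alpha/4.$$
Subtracting this from the previous strict inequality, the two occurrences of $\min_{\param} \losst^{t}_{\db}(\param)$ cancel and we are left with
$$\losst^{t}_{\db}(\paramt^{t}) - \losst^{t}_{\db}(\privparamt^{t}) \;>\; \alpha/2 - \alpha/4 \;=\; \alpha/4,$$
which combined with the opening display yields $\langle u^{t}, \dbt^{t} - \db \rangle > \alpha/4$, as desired.

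There is no real obstacle here: the heavy lifting---relating the linear quantity $\langle u^t, \dbt^t - \db\rangle$ to the excess loss of $\paramt^t$ over $\privparamt^t$ via the first-order optimality conditions---is already accomplished in Claim~\ref{clm:potentialdroponline}. This final step is just bookkeeping that aligns the threshold used by $\cSV$ ($\alpha/2$) with the accuracy slack of the single-query oracle $\mech'$ ($\alpha_0 = \alpha/4$), and checks that the chosen parameter $\alpha_0 = \alpha/4$ leaves a clean $\alpha/4$ gap to propagate into the multiplicative-weights regret bound of Lemma~\ref{lem:regretboundonline}.
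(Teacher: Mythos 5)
Your proof is correct and follows essentially the same route as the paper's: invoke Claim~\ref{clm:potentialdroponline}, use the contrapositive of the $\bot$-branch of~\eqref{eq:svaccurate} to get $\err_{\loss^t}(\db,\dbt^t) > \alpha/2$, use~\eqref{eq:oneshotaccurate} with $\alpha_0=\alpha/4$ to bound $\losst^t_\db(\privparamt^t)$, and combine. The paper phrases the intermediate step as $\langle u^t,\dbt^t-\db\rangle \geq \err_{\loss^t}(\db,\dbt^t) - \alpha_0$ and then plugs in the threshold, while you subtract the two centered inequalities directly, but this is only a presentational difference.
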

\ifnum\podscamera=1
\begin{proof}
\else
\begin{proof}[Proof of Claim~\ref{clm:finalpotentialdroponline}]
\fi
Our goal is to lower bound $\langle u^t, \dbt^{t} - \db \rangle$ by the quantity
$
\err_{\losst^{t}}(\db, \dbt^{t}) = \losst^{t}_{\db}(\paramt^{t}) - \min_{\param \in \dom} \losst^{t}_{\db}(\param).
$
This condition is almost implied by Claim~\ref{clm:potentialdroponline}, except with $\losst^{t}_{\db}(\privparamt^{t})$ in place of the minimum.  In the next claim, we extend the previous claim to handle an approximate minimizer.

However, by~\eqref{eq:oneshotaccurate}, $\privparamt^{t} = \mech'(\db, \losst^{t})$ is an approximate minimizer.  That is, 
\begin{equation} \label{eq:approxminonline}
\losst^{t}_{\db}(\privparamt^{t}) \leq \min_{\param \in \dom} \losst^{t}_{\db}(\param) + \alpha_0.
\end{equation}
Combining Claim~\ref{clm:potentialdroponline} with~\eqref{eq:approxminonline} we conclude that if $\rows \geq n'$, then for every $t = 1,\dots,T$, with probability at least $1-\beta_0$,
\ifnum\podscamera=1
\begin{align}
\left\langle u^t, \dbt^{t} - \db \right\rangle 
\geq{} &\losst^{t}_{\db}(\paramt^{t}) - \left(\min_{\param \in \dom} \losst^{t}_{\db}(\param) + \alpha_0\right) \notag \\ 
={} &\err_{\losst^{t}}(\db, \dbt^{t}) - \alpha_0 \label{eq:potentialdrop2online}
\end{align}
\else
\begin{equation} \label{eq:potentialdrop2online}
\left\langle u^t, \dbt^{t} - \db \right\rangle \geq \losst^{t}_{\db}(\paramt^{t}) - \left(\min_{\param \in \dom} \losst^{t}_{\db}(\param) + \alpha_0\right) = \err_{\losst^{t}}(\db, \dbt^{t}) - \alpha_0
\end{equation}
\fi

Given~\eqref{eq:potentialdrop2online} we would like to show that $\err_{\losst^{t}}(\db, \dbt^{t})$ is large.  But, by~\eqref{eq:svaccurate}, we would only do an update if $\err_{\loss^t}(\db, \dbt^{t}) > \alpha/2$.  Therefore we must have
\ifnum\podscamera=1
\begin{align*}\left\langle u^t, \dbt^{t} - \db \right\rangle \geq \losst^{t}_{\db}(\paramt^{t}) - \left(\min_{\param \in \dom} \losst^{t}_{\db}(\param) + \alpha_0\right) >{} &\alpha/2 - \alpha_0 \\ ={} &\alpha/4,
\end{align*}
\else
\begin{equation*}\left\langle u^t, \dbt^{t} - \db \right\rangle \geq \losst^{t}_{\db}(\paramt^{t}) - \left(\min_{\param \in \dom} \losst^{t}_{\db}(\param) + \alpha_0\right) > \alpha/2 - \alpha_0 = \alpha/4,
\end{equation*}
\fi
as desired.
\end{proof}

We are now ready to show that the online private multiplicative weights algorithm does not terminate early.
\begin{claim} \label{clm:noearlytermination}
If~\eqref{eq:svaccurate} and~\eqref{eq:oneshotaccurate} both hold, then the algorithm does not terminate before answering $k$ queries.
\end{claim}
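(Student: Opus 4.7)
The plan is to prove Claim~\ref{clm:noearlytermination} by contradiction, using the regret bound (Lemma~\ref{lem:regretboundonline}) together with the per-step lower bound on $\langle u^t, \dbt^t - \db\rangle$ established in Claim~\ref{clm:finalpotentialdroponline}. The algorithm terminates early only when $\cSV$ halts, and by Theorem~\ref{thm:onlinesvacc} $\cSV$ halts precisely when $T$ queries have been answered with $\top$, i.e.\ when the inner index $t$ reaches $T+1$. So suppose toward a contradiction that this happens, meaning the full sequence $u^1, \dots, u^T$ of update vectors gets produced.

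For each such $t$, conditions~\eqref{eq:svaccurate} and~\eqref{eq:oneshotaccurate} hold by hypothesis, and the algorithm has not terminated before step $t$, so Claim~\ref{clm:finalpotentialdroponline} applies and gives $\langle u^t, \dbt^t - \db\rangle > \alpha/4$. Summing over $t = 1, \dots, T$ yields
\[
\frac{1}{T}\sum_{t=1}^T \langle u^t, \dbt^t - \db\rangle > \frac{\alpha}{4}.
\]
On the other hand, since each $u^t \in [-S, S]^{\univ}$ by construction (using the scaling assumption $|\langle \param-\param', \grad \loss_x(\param)\rangle| \leq S$), Lemma~\ref{lem:regretboundonline} applies and gives
\[
\frac{1}{T}\sum_{t=1}^T \langle u^t, \dbt^t - \db\rangle \leq 2S\sqrt{\frac{\log|\univ|}{T}}.
\]

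Combining these two inequalities forces $\alpha/4 < 2S\sqrt{\log|\univ|/T}$, i.e.\ $T < 64 S^2 \log|\univ|/\alpha^2$. But the algorithm sets $T = 64 S^2 \log |\univ|/\alpha^2$, a contradiction. Hence the algorithm cannot perform $T$ updates, and therefore $\cSV$ never halts within the first $k$ queries, so the algorithm answers all $k$ queries.

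The main subtlety, and what I would verify carefully, is the bookkeeping of when Claim~\ref{clm:finalpotentialdroponline} is legitimately applicable at each step $t$: the claim is stated under the hypothesis that the algorithm has not terminated, so to avoid circularity one should phrase the contradiction argument as ``let $t^* \leq T$ be the first step at which an update is performed that would cause $\cSV$ to halt, and apply Claim~\ref{clm:finalpotentialdroponline} to every $t \leq t^*$.'' Everything else is a routine combination of the two displayed inequalities with the choice of $T$.
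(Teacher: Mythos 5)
Your proposal is correct and follows essentially the same argument as the paper: assume for contradiction that $T$ updates occur, apply Claim~\ref{clm:finalpotentialdroponline} to each step to get a per-step lower bound of $\alpha/4$, average, and contradict the regret bound of Lemma~\ref{lem:regretboundonline} using the choice $T = 64S^2\log|\univ|/\alpha^2$. The bookkeeping subtlety you flag is real but benign, and the paper handles it implicitly exactly as you suggest.
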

\ifnum\podscamera=1
\begin{proof}
\else
\begin{proof}[Proof of Claim~\ref{clm:noearlytermination}]
\fi
Assume for the sake of contradiction that the algorithm does terminate early because of the condition $t = T$.  Then, by Claim~\ref{clm:finalpotentialdroponline}, there is a sequence of $T$ queries such that for every query
$$
\langle u^{t}, \dbt^{t} - \db \rangle \geq \alpha/4.
$$
Then, using the bounded-regret property of multiplicative weights (Lemma~\ref{lem:regretboundonline}), we must have
\begin{align*}
\alpha/ 4
<{} &\frac{1}{T} \sum_{t=1}^{T}  \left\langle u^t, \dbt^{t} - \db \right\rangle \\
\leq{} &2S \sqrt{\frac{\log |\univ|}{T}} \tag{Lemma~\ref{lem:regretboundonline}} \\
\leq{} &\alpha/4,
\end{align*}
which is a contradiction.
\end{proof}

The analysis of this section immediately implies the following theorem
\begin{theorem} \label{thm:putittogetheronline}
The online private multiplicative weights algorithm is $(\alpha, \beta)$-accurate for answering $k$ CM queries from $\queryset_{\lossset}$ given a dataset of size $n$
for
$$
n = \max\left\{ n', \frac{4096 \cdot S^2 \cdot \sqrt{\log |\univ| \cdot \log(4/\delta)} \cdot \log (8k/\beta)}{\eps \alpha^2} \right\}.
$$
\end{theorem}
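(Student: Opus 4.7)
The plan is to assemble the three claims already established in this section and verify that the stated lower bound on $n$ meets the quantitative precondition of Claim~\ref{clm:conditiononsuccess}. First, I would let $E$ denote the joint event that both~\eqref{eq:svaccurate} and~\eqref{eq:oneshotaccurate} hold; Claim~\ref{clm:conditiononsuccess} then gives $\Prob{}{E} \geq 1-\beta$ as soon as $n$ is at least the threshold in that claim.

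Next, I would argue that on the event $E$ the accuracy conclusion holds deterministically. On $E$, Claim~\ref{clm:noearlytermination} guarantees that the algorithm does not exhaust its budget of $T$ multiplicative-weights updates, hence it answers all $k$ queries; Claim~\ref{clm:doesnotterminate} then guarantees $\err_{\loss^j}(\db, \hat{\param}^j) \leq \alpha$ for every $j \in [k]$, which is precisely the winning event of the sample-accuracy game of Definition~\ref{def:accuracy}. Combining the two steps yields the $(\alpha,\beta)$-accuracy claim.

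What remains is a bookkeeping substitution. Plugging the algorithm's choice $T = 64 S^2 \log |\univ|/\alpha^2$ into the Claim~\ref{clm:conditiononsuccess} threshold turns $\sqrt{T \log(4/\delta)}$ into $8 S \sqrt{\log |\univ| \cdot \log(4/\delta)}/\alpha$, and after multiplying by the remaining factors $512 \log(8k/\beta)/(\eps \alpha)$ and taking the maximum with $n'$, one recovers the bound stated in the theorem. The main thing to watch is that the sensitivity parameter $S$ from the queries $q^j$ propagates correctly through both the setting of $T$ and the online sparse-vector guarantee of Theorem~\ref{thm:onlinesvacc}; this is the step most likely to hide a missed constant, but poses no conceptual difficulty, since all of the nontrivial analysis—relating $\langle u^t, \dbt^t - \db\rangle$ to excess risk and then to the regret bound—has already been discharged inside the preceding claims.
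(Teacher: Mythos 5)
Your proof is correct and matches the paper's implicit argument: the paper simply remarks that ``the analysis of this section immediately implies'' the theorem, and the chain you assemble---Claim~\ref{clm:conditiononsuccess} for the high-probability event, Claim~\ref{clm:noearlytermination} for non-termination, Claim~\ref{clm:doesnotterminate} for per-query accuracy, then substituting $T = 64S^2\log|\univ|/\alpha^2$---is exactly that argument. Your caution about the $S$ factor is well placed: as printed, Claim~\ref{clm:conditiononsuccess} omits the factor of $S$ that should carry over from Theorem~\ref{thm:onlinesvacc} (with $\eps/2,\delta/2,\beta/2$ substituted), and your direct substitution as written yields only $4096\,S$; restoring the missing $S$ in Claim~\ref{clm:conditiononsuccess} is precisely what produces the $S^2$ in the theorem's bound.
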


\subsection{Privacy Analysis}
In this section we show that our algorithm (Figure~\ref{fig:onlinepmw}) is differentially private.  Privacy will follow rather easily from privacy of the online sparse vector algorithm, privacy of $\mech'$, and well known composition properties of differential privacy.

\begin{theorem} \label{thm:dp}
If $\mech'$ is $(\eps_0, \delta_0)$-differentially private, for $\eps_0, \delta_0$ as stated, then the algorithm in Figure~\ref{fig:onlinepmw} is $(\eps,\delta)$-differentially private.
\end{theorem}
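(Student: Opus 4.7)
The plan is to show that the algorithm's entire transcript is a post-processing of two differentially private sub-mechanisms that access $\db$---the online sparse vector algorithm $\cSV$, and the adaptive sequence of calls to $\mech'$---and then combine their privacy parameters via basic composition.

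First I would isolate the places where $\db$ is actually read. Inspection of Figure~\ref{fig:onlinepmw} shows exactly two: the stream of queries $q^j(\db)=\err_{\loss^j}(\db,\dbt^{t})$ passed to $\cSV$, and the invocations $\privparamt^{t}\getsr \mech'(\db,\loss^{t})$ that occur only when $a^j=\top$. Every other quantity produced by the algorithm (the reference point $\paramt^{t}$, the update vector $u^t$, the hypothesis $\dbt^{t+1}$, the analyst output $\hat{\param}^j$ on rounds with $a^j=\bot$, etc.) is a deterministic function of past $\cSV$ answers, past $\mech'$ outputs, and analyst inputs, hence is pure post-processing of the two sub-mechanisms. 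I would also verify the sensitivity claim that $q^j$ is $(3S/n)$-sensitive: since $\paramt^{t}$ depends only on $\dbt^{t}$ (not on $\db$), the scaling condition $|\langle \param-\param',\grad\loss_{\row}(\param)\rangle|\le S$ together with convexity gives $|\loss^j(\param;\row)-\loss^j(\param;\row')|\le S$ for all $\param\in\dom$ and $\row,\row'\in\univ$, so both $\loss^j_{\db}(\paramt^{t})$ and $\min_\param \loss^j_{\db}(\param)$ change by at most $S/n$ when a single row of $\db$ is swapped.

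Next I would bound the privacy cost of each sub-mechanism. By Theorem~\ref{thm:onlinesvacc}(1), the instantiation $\cSV(T,k,\alpha,\eps/2,\delta/2)$ is $(\eps/2,\delta/2)$-differentially private for the whole sequence $a^1,\dots,a^k$. For the calls to $\mech'$: because $\cSV$ halts after $T$ $\top$-answers, there are at most $T$ such calls, and they are adaptive since each $\loss^t$ depends on the prior transcript. I would apply the advanced (adaptive) composition theorem to this sequence of $T$ $(\eps_0,\delta_0)$-DP mechanisms with slack parameter $\delta'=\delta/4$; plugging in $\eps_0=\eps/\sqrt{8T\log(4/\delta)}$ and $\delta_0=\delta/(4T)$ yields combined privacy loss $\eps_0\sqrt{2T\log(1/\delta')}=\eps/2$ and $T\delta_0+\delta'=\delta/2$, i.e.\ the combined calls to $\mech'$ are $(\eps/2,\delta/2)$-DP.

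Finally I would combine the two privacy budgets by basic composition to obtain $(\eps,\delta)$-DP for the joint output of $\cSV$ and $\mech'$, and conclude by the post-processing property of differential privacy that the algorithm's full transcript $(\hat{\param}^1,\dots,\hat{\param}^k)$ is $(\eps,\delta)$-DP. The main subtlety---and the step I expect to need the most care---is the adaptivity: $\cSV$ and $\mech'$ are interleaved, and each receives inputs that depend on all previous outputs of both mechanisms, so one must carefully cast the analysis in the adaptive-composition framework (e.g.\ by viewing the joint process as an adversarially driven composition in which the ``adversary'' at each step selects the next $\cSV$ query or the next $\mech'$ input based on the full prior transcript) to ensure that the sparse-vector privacy guarantee and the advanced composition theorem apply in the form stated.
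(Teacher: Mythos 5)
Your high-level decomposition matches the paper exactly: isolate the two accesses to $\db$ (the $\cSV$ queries and the $\mech'$ calls), argue that $\cSV$ is $(\eps/2,\delta/2)$-DP from its black-box guarantee, argue that the $T$ adaptive calls to $\mech'$ compose to $(\eps/2,\delta/2)$-DP via the Dwork--Rothblum--Vadhan advanced composition theorem, and conclude by basic composition and post-processing. That part is fine and is the paper's argument.

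However, your verification of the $(3S/n)$-sensitivity of $q^j$ contains a genuine error. You assert that the scaling condition $\left|\left\langle \param-\param',\grad\loss_{\row}(\param)\right\rangle\right|\le S$ together with convexity yields
$\left|\loss(\param;\row)-\loss(\param;\row')\right|\le S$ for fixed $\param$ and varying $\row,\row'$. But the scaling condition only controls variation in the $\param$ argument for a \emph{fixed} row; combined with convexity it gives $\left|\loss(\param;\row)-\loss(\param';\row)\right|\le S$ (same $\row$, different $\param,\param'$), which is a different statement. The cross-row bound you wrote does not follow: take $\loss(\param;\row)=c_\row$ constant in $\param$ with arbitrary $c_\row$; the gradient in $\param$ vanishes so the scaling condition holds with $S=0$, yet $\left|\loss(\param;\row)-\loss(\param;\row')\right|$ is unbounded. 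In particular $\loss_\db(\paramt^t)$ alone can have unbounded sensitivity. The paper's proof handles this by noting that, because the per-row ranges have diameter at most $S$, one can pick constants $b_\row$ and pass to the shifted loss $\overline{\loss}(\param;\row)=\loss(\param;\row)-b_\row$, which takes values in a fixed interval of width $S$; the error functional $\err_\loss(\db,\dbt^t)$ is invariant under adding per-row constants, so its sensitivity equals that of $\err_{\overline{\loss}}$, and for the shifted loss the argument you sketched does go through. Without this shift-invariance observation the sensitivity bound does not hold as you derived it.
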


\subsubsection{Composition of Differential Privacy}

Before proceeding to the privacy analysis of our algorithm, we recall the composition properties of differential privacy.

A well-known fact about differential privacy is that the parameters $\eps, \delta$ degrade gracefully under composition.  Specifically, we will make use of the strong composition theorem due to Dwork, Rothblum, and Vadhan~\cite{DworkRV10}.  Formally, we say that an algorithm $\mech$ is a $T$-fold adaptive composition of $(\eps_0, \delta_0)$-differentially private algorithms if $\mech$ can be expressed as an instance of the following game for some adversary $\cB$:
\begin{figure}[ht]
\begin{framed}
\begin{algorithmic}
\STATE{Let $\db$ be a database, let $\cB$ be an adversary, $T$ be a parameter}
\STATE{For $t = 1,\dots,T$}
\INDSTATE{$\cB(z_{1},\dots,z_{t-1})$ outputs an $(\eps_0,\delta_0)$-DP $\mech_{t}$}
\INDSTATE{Let $z_{t} = \mech_{t}(\db)$}
\STATE{Output $z_1,\dots,z_{T}$}
\end{algorithmic}
\end{framed}
\vspace{-6mm}
\caption{$T$-Fold Adaptive Composition}
\label{fig:composition}
\end{figure}

\begin{theorem}[\cite{DworkRV10}] \label{thm:composition}
For every $T \in \N$ and $0 \leq \eps_0, \delta_0, \delta' \leq 1/2$, if $\mech$ is a $T$-fold adaptive composition of \ifnum\podscamera=1 algorithms that are $(\eps_0, \delta_0)$-differentially private\else $(\eps_0, \delta_0)$-differentially private algorithms\fi, then $\mech$ is $(\eps,\delta' + T\delta_0)$-differentially private for
$$
\eps = \sqrt{2T\log(1/\delta')} \cdot \eps_0 + 2 T \cdot \eps_0^2.
$$

In particular, if $\mech$ is a $T$-fold adaptive composition of $(\eps_0,\delta_0)$-differentially private algorithms, where
$$
\eps_0 = \frac{\eps}{\sqrt{8 T \log(2/\delta)}} \qquad \delta_0 = \frac{\delta}{2T},
$$
then $\mech$ is $(\eps, \delta)$-differentially private.
\end{theorem}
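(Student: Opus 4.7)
The plan is to prove the composition theorem via the \emph{privacy loss random variable} and a concentration argument on its sum across the $T$ adaptive rounds. First, I would fix arbitrary neighboring databases $\db \sim \db'$ and, for each round $t$, define the privacy loss
$$
L_t = \log \frac{\Pr[\mech_t(\db) = z_t \mid z_{<t}]}{\Pr[\mech_t(\db') = z_t \mid z_{<t}]},
$$
where $z_t$ is the actual output of round $t$ sampled from $\mech_t(\db)$ conditioned on the history $z_{<t} = (z_1,\dots,z_{t-1})$. The standard characterization I would invoke is that $\mech$ is $(\eps,\delta)$-DP iff for every neighboring pair $\Pr_{z\sim \mech(\db)}[\sum_{t=1}^{T} L_t > \eps] \leq \delta$ (up to the usual one-sided bookkeeping). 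Hence it suffices to show $\Pr[L > \eps] \leq \delta$ for $L := \sum_t L_t$ and the stated $\eps,\delta$.

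To handle the approximate-DP slack $\delta_0$ per step, I would invoke the standard decomposition lemma: for any $(\eps_0,\delta_0)$-DP mechanism and any fixed history $z_{<t}$, there is a subset $E_t$ of its output space with $\Pr[\mech_t(\db) \in E_t \mid z_{<t}] \geq 1-\delta_0$ on which the pointwise privacy loss satisfies $|L_t| \leq \eps_0$. Union-bounding, the bad event $\bigcup_t \overline{E_t}$ has probability at most $T\delta_0$, which I set aside as the $T\delta_0$ contribution to the final $\delta$. From now on I condition on $\bigcap_t E_t$ and work with the pointwise bound $|L_t|\leq \eps_0$.

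The main obstacle is the concentration step. On the good event, the conditional expectation $\mathbb{E}[L_t \mid z_{<t}]$ equals the KL divergence between the conditional output distributions of $\mech_t$ under $\db$ and $\db'$; since the likelihood ratio lies in $[e^{-\eps_0}, e^{\eps_0}]$, a routine calculation gives $\mathbb{E}[L_t \mid z_{<t}] \leq \eps_0(e^{\eps_0}-1) \leq 2\eps_0^2$ for $\eps_0 \leq 1/2$. Summing, the total conditional mean of $L$ is at most $2T\eps_0^2$. The centered sequence $M_t = L_t - \mathbb{E}[L_t \mid z_{<t}]$ is a martingale-difference sequence adapted to the filtration generated by the outputs, with each $M_t$ supported in an interval of length $O(\eps_0)$. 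A Hoeffding-type bound on the conditional MGF of each $M_t$ (giving $\mathbb{E}[e^{\lambda M_t} \mid z_{<t}] \leq \exp(\lambda^2 \eps_0^2 / 2)$) telescopes multiplicatively across rounds, and the Chernoff trick yields
$$
\Pr\Bigl[\sum_t M_t > s \,\Big|\, \bigcap_t E_t\Bigr] \leq \exp\Bigl(-\frac{s^2}{2T\eps_0^2}\Bigr).
$$
Setting $s = \eps_0\sqrt{2T\log(1/\delta')}$ gives failure probability at most $\delta'$, so unconditionally $\Pr[L > 2T\eps_0^2 + \eps_0\sqrt{2T\log(1/\delta')}] \leq \delta' + T\delta_0$, which is the first part of the theorem. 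The ``in particular'' clause then follows by substituting $\eps_0 = \eps/\sqrt{8T\log(2/\delta)}$, $\delta_0 = \delta/(2T)$, $\delta' = \delta/2$ and checking that each of the two summands in $\eps$ is at most $\eps/2$ (the $\sqrt{\cdot}$ term gives exactly $\eps/2$; the quadratic-in-$\eps_0$ term gives $\eps^2/(4\log(2/\delta))$, which is at most $\eps/2$ in the standard regime $\eps \leq \log(2/\delta)$).
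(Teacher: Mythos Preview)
The paper does not prove this theorem; it is quoted verbatim from \cite{DworkRV10} and invoked as a black box in the privacy analysis. Your sketch follows the standard privacy-loss martingale argument from that reference (define the per-round log-likelihood ratio, peel off the $T\delta_0$ mass via the decomposition lemma, bound the conditional mean by $O(\eps_0^2)$ via the KL computation, then apply Azuma--Hoeffding to the centered martingale), and the numerics check out. Two minor caveats worth tightening if you flesh this out: the ``iff'' characterization you state is really only a sufficient condition (which is all you need here), and the conditioning on $\bigcap_t E_t$ requires some care since it perturbs the conditional distributions---the clean way is to use the decomposition lemma to write each $\mech_t$ as a mixture of an $(\eps_0,0)$-DP mechanism and a $\delta_0$-mass remainder, rather than literally conditioning.
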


\subsubsection{Proof of Theorem~\ref{thm:dp}}

There are only two places where the algorithm uses the private dataset $\db$: (1) when using the online sparse vector algorithm to answer the queries $q^j = \err_{\loss^j}(\db, \dbt^{t})$, and (2) when using $\mech'$ to obtain a private approximation to the minimizer of some loss function $\loss^t$.  First, we will show that the online sparse vector algorithm is $(\eps/2, \delta/2)$-differentially private.  This claim will follow immediately from Theorem~\ref{thm:onlinesvacc} provided that the queries $q^j$ are indeed $(3S/n)$-sensitive.  To show this, first, observe that if $\loss \from \dom \times \univ \to \R$ satisfies
$$
\max_{\row \in \univ, \param, \param' \in \dom} \left|\left\langle \param - \param', \grad \loss_{\row}(\param) \right\rangle\right| \leq S,
$$
then for every $\row \in \univ$, there exists $b_{\row} \in \R$ such that for every $\param \in \dom$, $\loss(\param, \row) \in [b_{\row}, S]$.  That is, for every $\row$, there is some interval of width $S$ that bounds the loss $\loss(\param, \row)$.  With this information we can bound the sensitivity of the error function in the following way:  Fix any $\loss \in \lossset$.  Let $\overline{\ell}(\param, \row) = \ell(\param, \row) - b_{\row}$.  Let $\overline{\param} = \argmin_{\param \in \dom} \loss_{\dbt^{t}}(\param)$.
\begin{align*}
&\max_{\db, \db' \in \univ^{\rows}} \left| \err_{\loss}(\db, \dbt^{t}) - \err_{\loss}(\db', \dbt^{t}) \right| \\
={} &\max_{\db, \db' \in \univ^{\rows}} \left| \left(\loss_{\db}(\overline{\param}) - \min_{\param \in \dom} \loss_{\db}(\param)\right) - \left(\loss_{\db'}(\overline{\param}) - \min_{\param \in \dom} \loss_{\db'}(\param)\right) \right| \\
={} &\max_{\db, \db' \in \univ^{\rows}} \left| \left(\overline{\loss}_{\db}(\overline{\param}) - \min_{\param \in \dom} \overline{\loss}_{\db}(\param)\right) - \left(\overline{\loss}_{\db'}(\overline{\param}) - \min_{\param \in \dom} \overline{\loss}_{\db'}(\param)\right) \right| \\
={} &\max_{\db, \db' \in \univ^{\rows}} \left| \left(\overline{\loss}_{\db}(\overline{\param}) -\loss_{\db'}(\overline{\param}) \right)\right|+ \left| \left(\min_{\param \in \dom} \overline{\loss}_{\db}(\param) - \min_{\param \in \dom} \overline{\loss}_{\db'}(\param)\right) \right| \\
\leq{} &\frac{S}{\rows} + \frac{2S}{\rows} = \frac{3S}{\rows}.
\end{align*}
Since this bound holds for every $\loss \in \lossset$, we have
\begin{align*}
\max_{\loss \in \lossset} \max_{\db, \db' \in \univ^{\rows}} \left| \err_{\loss}(\db, \dbt^{t}) - \err_{\loss}(\db', \dbt^{t}) \right| \leq{} \frac{3S}{\rows}.
\end{align*}

Thus, the queries given to $\cSV$ are indeed $(3S/n)$-sensitive and we are justified in assuming that $\cSV$ is an $(\eps/2,\delta/2)$-differentially private algorithm.

Now, we return to analyzing the privacy loss of $\mech'.$  By assumption, for every fixed $\loss^{t}$, the choice of $\privparamt^{t} = \mech'(\db, \losst^{t})$ is $(\eps_0, \delta_0)$-differentially private with respect to the input $\db$.  Moreover, the choice of $\loss^{t}$ depends only on the output of $\cSV$, which we have already argued is $(\eps/2, \delta/2)$-differentially private.  Therefore, we can view all of the calls to $\mech'$ as a single $T$-fold adaptive composition of $(\eps_0, \delta_0)$-differentially private algorithms.  For $\eps_0, \delta_0$ as specified in the online private multiplicative weights algorithm, the result will be $(\eps/2, \delta/2)$-differentially private.  Since these are the only two ways in which the private dataset $\db$ is used, we have proven that the entire algorithm is $(\eps, \delta)$-differentially private.

\section{Applications of Theorem \ifnum\podscamera=1 2\else \ref{thm:putittogetheronline}\fi}

In this section we give some interpretation of Theorem~\ref{thm:putittogetheronline} and show how it can be applied to specific interesting cases that have been considered in the literature on differentially private convex minimization in order to obtain the results stated in the introduction.

\subsection{Interpreting Theorem~\ref{thm:putittogetheronline}}

In Theorem~\ref{thm:putittogetheronline}, we have assumed that there exists an $(\eps_0, \delta_0)$-differentially private algorithm $\mech'$ that is $(\alpha_0,\beta_0)$-accurate for any one $\loss$ from $\lossset$ given $\rows'$ samples.  By a standard argument, if there exists a $(1, \delta_0)$-differentially private algorithm $\mech''$ that is $(\alpha_0,\beta_0)$-accurate for $\loss$ given $\rows''$ samples, then there exists an $(\eps_0,\delta_0)$-differentially private algorithm with the same accuracy given $O(\rows''/\eps_0)$ samples.  Applying this observation, simplifying, and dropping the dependence on $\beta, \eps, \delta$, we can write the requirement in Theorem~\ref{thm:putittogetheronline} as
\ifnum\podscamera=1
\begin{align*}
\rows \gtrsim{}  &\max \left\{ \frac{\rows''}{\eps_0}, \frac{S^2 \cdot \log k}{\alpha^2}  \right\}  \\
\lesssim{} &\frac{S \cdot \sqrt{\log|\univ|} \cdot \log k}{\alpha} \cdot \max \left\{ \rows'', \frac{S}{\alpha}  \right\}
\end{align*}
\else
\begin{align*}
\rows \gtrsim{}  \max \left\{ \frac{\rows''}{\eps_0}, \frac{S^2 \cdot \log k}{\alpha^2}  \right\} 
\lesssim{} \frac{S \cdot \sqrt{\log|\univ|} \cdot \log k}{\alpha} \cdot \max \left\{ \rows'', \frac{S}{\alpha}  \right\}
\end{align*}
\fi
The first term in the max is just the size of dataset required to answer a single convex minimization query in $\lossset$ with $\eps = 1$.  The second term in the max can be either larger or smaller than $n''$.  However, for the most basic setting of a single, Lipschitz loss function over a bounded domain, $n'' \gg S / \alpha$, so the second term will be dominated by the first term.  

Thus, in some cases, Theorem~\ref{thm:putittogetheronline} can be interpreted as saying that the amount of data required to answer $k$ queries from $\lossset$ is only a factor of $\approx (S \cdot \sqrt{\log|\univ|} \cdot \log k) / \alpha$ larger than the amount of data required to both answer a single query in $\lossset$.  Using the simple composition approach where each of the $k$ queries is answered independently would require a factor of $\approx \sqrt{k}$ more data than answering a single query.  Thus our algorithm is a substantial improvement when $\sqrt{k} \gg (S \cdot \sqrt{\log|\univ|} \cdot \log k)/\alpha$.

\subsection{Applications}

We now show how to instantiate Theorem~\ref{thm:putittogetheronline} with various differentially private algorithms for answering convex minimization queries to obtain the results in the Introduction.  

\subsubsection{Lipschitz and Bounded Loss Functions.}
In much of the work on differentially private convex minimization, the queries are normalized so that the parameter $\param$ lies in a unit $L_2$ ball, and the loss function $\loss$ satisfies a Lipschitz condition.  Bassily, Smith, and Thakurta~\cite{BassilyST14} recently showed optimal upper and lower bounds for answering a single query from this family.  Formally,
\begin{theorem}[\cite{BassilyST14}] \label{thm:oneboundedlipschitz}
Let $\ell \from \dom \times \univ \to \R$ be a convex loss function where
$
\dom \subseteq \set{\param \in \R^d \mid \, \|\param\|_2 \leq 1}
$
and for every $\param \in \dom$, $\row \in \univ$,
$
\| \grad \loss_\row(\param) \|_2 \leq 1.
$  Let $\query_{\loss}$ be the associated CM query.
There is a $(\eps_0,\delta_0)$-differentially private algorithm that is $(\alpha_0, \beta_0)$-accurate for $\query_{\ell}$ on datasets of size $n$ for
$$
\rows = O\left(\frac{\sqrt{d}}{\alpha_0 \eps_0}\right) \cdot \polylog\left(\frac{1}{\delta_{0}}, \frac{1}{\beta_{0}}\right).
$$  
\end{theorem}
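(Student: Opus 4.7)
My plan is to follow the noisy stochastic gradient descent (noisy SGD) approach that Bassily, Smith, and Thakurta use to achieve the optimal $\sqrt{d}/\alpha_0 \eps_0$ rate. The algorithm maintains an iterate $\param_t \in \dom$; at each of $T$ steps it samples an index $i_t$ uniformly at random from $[n]$, computes the gradient $g_t = \grad \loss_{x_{i_t}}(\param_t)$, perturbs it with Gaussian noise $Z_t \sim \mathcal{N}(0, \sigma^2 I_d)$, and takes a projected gradient step $\param_{t+1} = \Pi_\dom(\param_t - \eta_t (g_t + Z_t))$. The algorithm returns either the final iterate or (more conveniently for convergence analysis) the average of the iterates.

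For privacy, I would argue that for each fixed $i_t$ the mapping $\db \mapsto g_t$ has $L_2$-sensitivity $1$ by the Lipschitz assumption on $\loss$. Applying the Gaussian mechanism with scale $\sigma$ gives a per-step guarantee of roughly $(O(\sqrt{\log(1/\delta_0)}/\sigma), \delta_0/T)$-differential privacy. I would then use privacy amplification by subsampling to gain an additional factor of $1/n$ (since $i_t$ is uniform on $[n]$), and combine the $T$ steps via the strong composition theorem of Dwork--Rothblum--Vadhan to get the overall $(\eps_0, \delta_0)$ guarantee. Balancing these three ingredients forces $\sigma = \tilde{\Theta}(\sqrt{T \log(1/\delta_0)}/(n\eps_0))$.

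For utility I would invoke the standard convergence analysis of SGD for convex $1$-Lipschitz functions on a unit-radius convex domain. With step size $\eta_t = 1/\sqrt{t}$, the averaged iterate attains excess empirical risk $O(1/\sqrt{T})$ from the stochasticity of the sampled gradient, plus an additional $O(\sigma \sqrt{d}/\sqrt{T})$ from the injected noise — here I would use $\Expectation \|Z_t\|_2 = O(\sigma \sqrt{d})$ so that the extra variance enters the SGD bound only through its expected norm rather than dimension-wise. Plugging in $\sigma$ and choosing $T = \Theta(n^2)$ balances the two contributions and yields excess empirical risk
\[
\alpha_0 \;=\; \tilde{O}\!\left(\frac{\sqrt{d}}{n \eps_0}\right) \cdot \polylog(1/\delta_0, 1/\beta_0),
\]
from which the claimed bound on $n$ follows by inversion. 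A high-probability statement (the $\beta_0$ term) can be extracted either by a Markov argument together with repetition, or by directly analyzing the high-probability convergence of noisy SGD.

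The main obstacle is achieving the tight $\sqrt{d}$ dependence rather than the $d$ dependence that a naive analysis yields. This requires two ingredients working together: the Gaussian mechanism (not Laplace) to exploit the $L_2$ geometry of the gradients and of $\dom$, and a careful tracking of how $\|Z_t\|_2$ enters the SGD regret bound so that only its expected norm $O(\sigma \sqrt{d})$ shows up, rather than the per-coordinate variance $\sigma^2 d$. A secondary subtlety is that the privacy amplification-by-subsampling lemma must be applied with the right constants so that running $T = \Theta(n^2)$ iterations is affordable; choosing $T$ too large would make the composition term dominate, and choosing $T$ too small would leave the SGD error $1/\sqrt{T}$ dominating the noise error and spoil the balance.
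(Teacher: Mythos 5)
This theorem is imported verbatim from \cite{BassilyST14}; the paper you are reading states it without proof and uses it only as a black-box subroutine in Theorem~\ref{thm:putittogetheronline}. Your sketch is a faithful reconstruction of the noisy-SGD argument in that reference---projected SGD with Gaussian gradient perturbation, privacy via Gaussian mechanism plus amplification-by-subsampling plus strong composition, utility via the standard SGD bound with the noise entering through its expected norm $O(\sigma\sqrt{d})$, and the choices $\sigma = \tilde\Theta(\sqrt{T\log(1/\delta_0)}/(n\eps_0))$, $T=\Theta(n^2)$ giving $n = \tilde{O}(\sqrt{d}/(\alpha_0\eps_0))$---so there is nothing in the present paper to compare against.
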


Note that if $\dom$ is contained in a unit $L_2$ ball and $\loss$ is $1$-Lipschitz, then the scaling parameter $S$ is at most $2$.  Combining Theorem~\ref{thm:putittogetheronline} and Theorem~\ref{thm:oneboundedlipschitz} yields the following result.

\begin{theorem}
Let $\lossset$ be the set of convex loss functions $\ell \from \dom \times \univ \to \R$ for 
$
\dom \subseteq \set{\param \in \R^d \mid \, \|\param\|_2 \leq 1}
$
such that for every $\loss \in \lossset$, $\param \in \dom$, $\row \in \univ$,
$
\| \grad \loss_\row(\param) \|_2 \leq 1.
$
Let $\queryset_{\lossset}$ be the associated family of CM queries.
There is an $(\eps, \delta)$-differentially private algorithm that is $(\alpha, \beta)$-accurate for $k$ CM queries from $\queryset_{\lossset}$ on datasets of size $n$ for
$$
\rows = \tilde{O}\left(\frac{\sqrt{\log|\univ|}}{\alpha^2 \eps} \cdot \max\left\{\sqrt{d}, \log k \right\}\right) \cdot \polylog\left(\frac{1}{\delta}, \frac{1}{\beta}\right).
$$
\end{theorem}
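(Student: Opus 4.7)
The plan is a direct composition of Theorem~\ref{thm:putittogetheronline} (which provides a generic reduction from answering $k$ CM queries to answering one) with Theorem~\ref{thm:oneboundedlipschitz} (which solves the single-query problem for the Lipschitz, bounded family). Essentially no new ideas are needed; the work is to bound the scaling parameter $S$ and then track the sample complexity through the parameter settings in Figure~\ref{fig:onlinepmw}.

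First, I would verify that every $\loss \in \lossset$ satisfies the scaling condition $\max_{\row, \param, \param'} |\langle \param - \param', \grad \loss_\row(\param)\rangle| \leq S$ with $S \leq 2$. This is immediate from Cauchy--Schwarz together with the fact that $\dom$ is contained in the unit $L_2$ ball (so $\|\param - \param'\|_2 \leq 2$) and $\|\grad \loss_\row(\param)\|_2 \leq 1$.

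Next, I would instantiate the oracle $\mech'$ required by Theorem~\ref{thm:putittogetheronline}. Theorem~\ref{thm:oneboundedlipschitz} provides an $(\eps_0, \delta_0)$-differentially private algorithm that is $(\alpha_0, \beta_0)$-accurate for any single $\query_\loss \in \queryset_{\lossset}$ on datasets of size
\[
n'' \;=\; O\!\left(\tfrac{\sqrt{d}}{\alpha_0 \eps_0}\right) \cdot \polylog\!\left(\tfrac{1}{\delta_0}, \tfrac{1}{\beta_0}\right).
\]
Theorem~\ref{thm:putittogetheronline} then says that the online private multiplicative weights algorithm, instantiated with this oracle and with the internal parameters $\alpha_0 = \alpha/4$, $\beta_0 = \beta/(2T)$, $\eps_0 = \eps / \sqrt{8T \log(4/\delta)}$, $\delta_0 = \delta/(4T)$, and $T = \Theta(S^2 \log|\univ|/\alpha^2) = \Theta(\log|\univ|/\alpha^2)$, is $(\eps,\delta)$-differentially private and $(\alpha,\beta)$-accurate for $k$ queries provided
\[
n \;\geq\; \max\!\left\{\, n',\;\; \frac{4096\, S^2 \sqrt{\log|\univ| \cdot \log(4/\delta)} \cdot \log(8k/\beta)}{\eps \alpha^2}\,\right\},
\]
where $n'$ is the bound from Theorem~\ref{thm:oneboundedlipschitz} under the rescaled parameters.

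The last step is just arithmetic: substitute the parameter choices into $n''$. Since $\sqrt{T} = \tilde{O}(\sqrt{\log|\univ|}/\alpha)$ and $1/\eps_0 = \tilde{O}(\sqrt{T}/\eps) \cdot \polylog(1/\delta)$, the first term becomes
\[
n' \;=\; \tilde{O}\!\left(\frac{\sqrt{d}}{\alpha} \cdot \frac{\sqrt{\log|\univ|}}{\alpha \eps}\right) \cdot \polylog(1/\delta, 1/\beta)
\;=\; \tilde{O}\!\left(\frac{\sqrt{d \log|\univ|}}{\alpha^2 \eps}\right) \cdot \polylog(1/\delta, 1/\beta),
\]
while the second term is $\tilde{O}(\sqrt{\log|\univ|} \cdot \log k / (\alpha^2 \eps))$. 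Taking the maximum and pulling the common factor $\sqrt{\log|\univ|}/(\alpha^2 \eps)$ out yields exactly the stated bound
\[
n \;=\; \tilde{O}\!\left(\frac{\sqrt{\log|\univ|}}{\alpha^2 \eps} \cdot \max\{\sqrt{d},\, \log k\}\right) \cdot \polylog(1/\delta, 1/\beta).
\]
There is no genuine obstacle; the only mild subtlety to be careful about is that the privacy parameter $\eps_0$ fed to the oracle is itself $1/\sqrt{T}$ smaller than $\eps$, which is what converts the $\sqrt{d}/\alpha$ single-query dependence into the $\sqrt{d \log|\univ|}/\alpha^2$ dependence that appears in the first branch of the $\max$.
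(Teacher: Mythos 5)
Your proposal is correct and follows exactly the route the paper takes: the paper's "proof" is literally just the one-line remark "Combining Theorem~\ref{thm:putittogetheronline} and Theorem~\ref{thm:oneboundedlipschitz} yields the following result," and you have correctly filled in the arithmetic (bounding $S\leq 2$ via Cauchy--Schwarz, substituting the parameter settings of Figure~\ref{fig:onlinepmw} into the single-query bound, and noting that the $1/\eps_0 = \tilde{O}(\sqrt{T}/\eps)$ rescaling is what produces the $\sqrt{d\log|\univ|}/\alpha^2$ dependence). Nothing is missing.
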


\subsubsection{Generalized Linear Models.}
Using the algorithm of Theorem~\ref{thm:oneboundedlipschitz}, $n$ must grow polynomially with $d$ to solve even a single CM query in dimension $d$, and this was shown to be inherent by Bassily et al.~\cite{BassilyST14} (building on~\cite{BunUV14}).  However, the work of Jain and Thakurta~\cite{JainT14} shows that dependence on $d$ can be avoided for the important class of \emph{unconstrained generalized linear models}.  For example, logistic regression and linear regression are generalized linear models.  A convex loss function $\loss \from \dom \times \univ \to \R$ is a generalized linear model if $\dom \subseteq \R^d$, $\univ \subseteq \R^d$, and
$\loss(\param, \row)$ depends only on the inner product of $\param$ and $\row$.  That is, there exists a convex function $\loss' \from \R \to \R$ such that
$
\loss(\param,\row) = \loss'(\langle \param, \row \rangle).
$
We say that the generalized linear model is unconstrained if there are no constraints other than boundedness.  That is, $\dom = \set{\param \in \R^d \mid \|\param\|_2 \leq 1}$.
\begin{theorem}[\cite{JainT14}] \label{thm:oneglm}
Let $\ell \from \dom \times \univ \to \R$ be an unconstrained generalized linear model with the domain
$
\dom = \set{\param \in \R^d \mid \, \|\param\|_2 \leq 1}
$ and for every $\param \in \dom$, $\row \in \univ$,
$
\| \grad \loss_\row(\param) \|_2 \leq 1.
$
Let $\query_{\loss}$ be the associated CM query.
There is a $(\eps_0,\delta_0)$-differentially private algorithm that is $(\alpha_0, \beta_0)$-accurate for $\query_{\ell}$ on datasets of size $n$ for
$$
\rows = O\left(\frac{1}{\alpha_0^2 \eps_0}\right) \cdot \polylog\left(\frac{1}{\delta_{0}}, \frac{1}{\beta_{0}}\right).
$$  
\end{theorem}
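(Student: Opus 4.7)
The key structural property of unconstrained generalized linear models that enables a rate independent of the ambient dimension $d$ is that $\loss(\param; \row) = \phi(\langle \param, \row\rangle)$ for a convex $\phi \from \R \to \R$, so the loss depends on $\param$ only through the scalar inner product $\langle \param, \row\rangle$. This lets us bound the error introduced by a $d$-dimensional perturbation by a one-dimensional quantity, thereby avoiding the $\sqrt{d}$ factor that would appear in the utility analysis of a generic Lipschitz loss.

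My plan is to use output perturbation on a regularized ERM. Add an $L_2$ regularizer $\frac{\lambda}{2}\|\param\|_2^2$ to the empirical loss and let $\tilde\param$ be the exact minimizer of the resulting $\lambda$-strongly convex objective. A standard stability argument, using the $1$-Lipschitz assumption on $\loss$, shows that replacing one row of $\db$ moves $\tilde\param$ by at most $O(1/(\lambda n))$ in $L_2$ norm. Then release $\hat\param = \tilde\param + \xi$ (projected back to $\dom$ if necessary), where $\xi$ is isotropic Gaussian noise with per-coordinate standard deviation $\sigma = \tilde O(1/(\lambda n \eps_0))$. The standard Gaussian mechanism analysis certifies this as $(\eps_0, \delta_0)$-differentially private.

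The critical step is to argue that the noise contributes only $\tilde O(1/(\lambda n \eps_0))$ to the expected excess loss, independent of $d$. The Lipschitz hypothesis $\|\grad \loss_\row(\param)\|_2 = |\phi'(\langle\param,\row\rangle)| \cdot \|\row\|_2 \leq 1$ implies $|\phi'| \leq 1/\|\row\|_2$ on the relevant range, so $\phi$ is $(1/\|\row\|_2)$-Lipschitz and
\[
|\loss(\tilde\param + \xi; \row) - \loss(\tilde\param; \row)| = |\phi(\langle\tilde\param + \xi, \row\rangle) - \phi(\langle\tilde\param, \row\rangle)| \leq |\langle\xi, \row\rangle|/\|\row\|_2.
\]
Since $\langle\xi, \row\rangle \sim N(0, \sigma^2 \|\row\|_2^2)$, we have $\mathbb{E}[|\langle\xi, \row\rangle|] \leq \sigma\|\row\|_2$, so the expected excess loss incurred by the noise, averaged over $\row \sim \db$, is at most $\sigma$, with no dimension dependence. (Note that the analogous bound obtained by Cauchy--Schwarz alone would give $\|\xi\|_2 \cdot |\phi'| \sim \sigma\sqrt{d}$, which is precisely why the GLM structure is essential here.)

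The main remaining obstacle is balancing the two sources of error. The regularization bias contributes $O(\lambda)$ to the excess loss (since $\|\param^*\|_2 \leq 1$), while the noise contributes $\tilde O(1/(\lambda n \eps_0))$. Choosing $\lambda = \Theta(1/\sqrt{n\eps_0})$ balances these at excess risk $\tilde O(1/\sqrt{n\eps_0})$, which is at most $\alpha_0$ once $n = \tilde O(1/(\alpha_0^2 \eps_0))$, as claimed. The high-probability $(\alpha_0,\beta_0)$-accuracy and the $\polylog(1/\beta_0, 1/\delta_0)$ factors come from standard Gaussian tail bounds on $\|\xi\|_2$ and $|\langle \xi, \row\rangle|$, combined with the Gaussian mechanism's $\log(1/\delta_0)$ dependence.
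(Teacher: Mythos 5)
This theorem is not proved in the paper; it is cited directly from Jain and Thakurta~\cite{JainT14} and used as a black box, so there is no internal proof to compare against. I will therefore assess the proposal on its own terms.

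Your overall plan --- regularize with $\frac{\lambda}{2}\|\param\|_2^2$, use the $O(1/(\lambda n))$ $L_2$-stability of the regularized minimizer to calibrate Gaussian output perturbation, and exploit the one-dimensional GLM structure to show the noise contributes $O(\sigma)$ rather than $O(\sigma\sqrt d)$ to excess risk --- is the right template, and the parameter balance $\lambda = \Theta(1/\sqrt{n\eps_0})$ yielding $n = \tilde O(1/(\alpha_0^2\eps_0))$ is correct. However, two points would need to be fixed before this is airtight. First, the aside ``(projected back to $\dom$ if necessary)'' is not harmless: if you release $\Pi_\dom(\tilde\param+\xi)$ instead of $\tilde\param+\xi$, the displacement from $\tilde\param$ is a full $d$-dimensional quantity of size up to $\|\xi\|_2\approx\sigma\sqrt d$, and the only route from that to a loss bound is Cauchy--Schwarz, which reintroduces exactly the $\sqrt d$ factor the GLM argument was designed to avoid. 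The word ``unconstrained'' in the theorem is doing real work: the released point is permitted to lie outside $\dom$, and only the benchmark $\min_{\param\in\dom}\loss_\db(\param)$ is constrained. You must \emph{not} project.

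Second, the chain $\|\grad\loss_\row(\param)\|_2\le 1\Rightarrow|\phi'|\le 1/\|\row\|_2\Rightarrow|\phi(\langle\tilde\param+\xi,\row\rangle)-\phi(\langle\tilde\param,\row\rangle)|\le|\langle\xi,\row\rangle|/\|\row\|_2$ applies the Lipschitz bound along a segment between scalars $\langle\tilde\param,\row\rangle$ and $\langle\tilde\param+\xi,\row\rangle$, and the corresponding points in $\R^d$ will generically lie far outside the unit ball (the first-order condition only gives $\|\tilde\param\|_2\le 1/\lambda = O(\sqrt{n\eps_0})$). The hypothesis as stated in the theorem only bounds the gradient for $\param\in\dom$. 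You therefore need the stronger and standard assumption (implicit in \cite{JainT14}, and actually the one in the paper's introductory definition of UGLM, where $\dom=\R^d$) that the link function $\phi$ is globally $1$-Lipschitz. With the projection dropped and the global Lipschitz assumption stated explicitly, the proposal is a correct reconstruction of the kind of argument behind the cited bound.
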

Combining Theorem~\ref{thm:putittogetheronline} and Theorem~\ref{thm:oneglm} yields the following result.
\begin{theorem}
Let $\lossset$ be the set of unconstrained generalized linear models $\ell \from \dom \times \univ \to \R$ with the domain
$
\dom \subseteq \set{\param \in \R^d \mid \, \|\param\|_2 \leq 1}
$
such that for every $\loss \in \lossset$, $\param \in \dom$, $\row \in \univ$,
$
\| \grad \loss_\row(\param) \|_2 \leq 1.
$
Let $\queryset_{\lossset}$ be the associated family of CM queries.
There is an $(\eps, \delta)$-differentially private algorithm that is $(\alpha, \beta)$-accurate for $k$ CM queries from $\queryset_{\lossset}$ given $\rows$ records for
$$
\rows = \tilde{O}\left(\frac{\sqrt{\log|\univ|}}{\alpha^2 \eps} \cdot \max\left\{\frac{1}{\alpha}, \log k \right\}\right) \cdot \polylog\left(\frac{1}{\delta}, \frac{1}{\beta}\right).
$$
\end{theorem}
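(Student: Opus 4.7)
The plan is to derive this theorem as an immediate corollary of the meta-result Theorem~\ref{thm:putittogetheronline} by instantiating the black-box single-query oracle $\mech'$ with the Jain--Thakurta algorithm of Theorem~\ref{thm:oneglm}. The work is essentially bookkeeping; the only thing to check is that the scaling parameter $S$ is bounded and that the arithmetic works out.

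First I would verify the scaling hypothesis required in Figure~\ref{fig:onlinepmw}. Since every $\loss \in \lossset$ is $1$-Lipschitz in $\param$ and $\dom$ is contained in the unit $\ell_2$ ball, Cauchy--Schwarz gives
$$
\max_{\row \in \univ,\, \param,\param' \in \dom} \left|\left\langle \param - \param',\, \grad \loss_\row(\param)\right\rangle\right| \;\leq\; \|\param - \param'\|_2 \cdot \|\grad \loss_\row(\param)\|_2 \;\leq\; 2,
$$
so we may take $S = 2$ throughout. This allows the online PMW algorithm of Figure~\ref{fig:onlinepmw} to be run with this family.

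Next I would plug in Theorem~\ref{thm:oneglm} to obtain the oracle $\mech'$ with the privacy/accuracy budget prescribed by Figure~\ref{fig:onlinepmw}, namely $\alpha_0 = \alpha/4$, $\beta_0 = \beta/(2T)$, $\delta_0 = \delta/(4T)$, and $\eps_0 = \eps/\sqrt{8T\log(4/\delta)}$, where $T = \Theta(\log|\univ|/\alpha^2)$. Theorem~\ref{thm:oneglm} is stated for general $\eps_0$, so the sample complexity of the oracle is
$$
n' \;=\; \tilde{O}\!\left(\frac{1}{\alpha_0^2\, \eps_0}\right) \cdot \polylog(1/\delta_0, 1/\beta_0) \;=\; \tilde{O}\!\left(\frac{\sqrt{T \log(1/\delta)}}{\alpha^2 \eps}\right) \cdot \polylog(1/\delta, 1/\beta),
$$
which simplifies to $n' = \tilde{O}\!\bigl(\sqrt{\log|\univ|}/(\alpha^3 \eps)\bigr) \cdot \polylog(1/\delta, 1/\beta)$.

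Finally I would substitute $S$ and $n'$ into the sample-size bound of Theorem~\ref{thm:putittogetheronline}, which demands the maximum of $n'$ and the PMW overhead term $\tilde{O}\!\bigl(S^2 \sqrt{\log|\univ|\cdot \log(1/\delta)}\cdot \log(k/\beta)/(\eps \alpha^2)\bigr)$. Factoring out the common $\sqrt{\log|\univ|}/(\alpha^2 \eps)$, this maximum collapses to
$$
n \;=\; \tilde{O}\!\left(\frac{\sqrt{\log|\univ|}}{\alpha^2 \eps} \cdot \max\!\left\{\frac{1}{\alpha},\, \log k\right\}\right) \cdot \polylog(1/\delta, 1/\beta),
$$
exactly matching the claimed bound. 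There is no real obstacle here; the proof contains no new ideas beyond the combination of the two cited theorems, and the only subtle step is tracking how the $\sqrt{T}$ factor coming from advanced composition inside $\eps_0$ interacts with the $1/\alpha_0^2$ dependence of the Jain--Thakurta bound to produce the $1/\alpha^3$ term that is ultimately dominated (unless $k$ is sub-exponential in $1/\alpha$) by the PMW overhead.
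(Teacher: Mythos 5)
Your proposal is correct and takes exactly the paper's (implicit) route: instantiate the black-box oracle $\mech'$ in Theorem~\ref{thm:putittogetheronline} with the Jain--Thakurta GLM algorithm of Theorem~\ref{thm:oneglm}, note $S \le 2$, and track how $\sqrt{T}\sim \sqrt{\log|\univ|}/\alpha$ from advanced composition combines with the $1/\alpha_0^2$ dependence to give the $1/\alpha^3$ term. The paper states the result as an immediate corollary without showing this arithmetic; your bookkeeping fills it in correctly.
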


\subsubsection{Strongly Convex Loss Functions.}  
Stronger accuracy guarantees for answering a single CM query are also achievable in the common special case where $\loss$ is strongly convex.  Informally, $\loss$ is strongly convex if it can be lower bounded by a quadratic function.  Specifically, for a parameter $\sigma \geq 0$, the function $\loss \from \dom \times \univ \to \R$ is $2\sigma$-strongly convex if for every $\param, \param' \in \dom$ and $\row \in \univ$, $\loss(\param'; \row) \geq \loss(\param; \row) + \langle \param' - \param, \grad \loss(\param; \row) \rangle +  \sigma \| \param' - \param \|_2^2$.  In the previous statement, the gradient is with respect to $\param$.
\begin{theorem}[\cite{BassilyST14}] \label{thm:onestronglyconvex}
Let $\ell \from \dom \times \univ \to \R$ be a $\sigma$-strongly convex loss function where
$
\dom \subseteq \set{\param \in \R^d \mid \, \|\param\|_2 \leq 1}
$
and for every $\param \in \dom$, $\row \in \univ$,
$
\| \grad \loss_\row(\param) \|_2 \leq 1.
$  Let $\query_{\loss}$ be the associated CM query.
There is a $(\eps_0,\delta_0)$-differentially private algorithm that is $(\alpha_0, \beta_0)$-accurate for $\query_{\ell}$ on datasets of size $n$ for
$$
\rows = O\left(\frac{\sqrt{d}}{\sqrt{\sigma \alpha_0} \eps_0}\right) \cdot \polylog\left(\frac{1}{\delta_{0}}, \frac{1}{\beta_{0}}\right).
$$  
\end{theorem}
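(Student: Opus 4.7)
The plan is to prove this via differentially private noisy projected gradient descent, the natural strongly-convex refinement of the noisy-SGD techniques used by BassilyST14 in the Lipschitz case. Initialize $\param_1 \in \dom$ arbitrarily, and for $t = 1, \dots, T$ perform the update
$$
\param_{t+1} = \Pi_{\dom}\bigl(\param_t - \eta_t \bigl(\grad \loss(\param_t; \db) + Z_t\bigr)\bigr), \qquad Z_t \sim \cN(0, \sigma_Z^2 I_d) \text{ i.i.d.},
$$
with step size $\eta_t = 1/(\sigma t)$ tailored to the strong-convexity parameter, and output the averaged iterate $\bar\param = \tfrac{1}{T} \sum_{t=1}^T \param_t$.

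For privacy, the 1-Lipschitz hypothesis forces every empirical gradient $\grad \loss(\param_t; \db)$ to have $L_2$-sensitivity at most $2/n$. Calibrating $\sigma_Z = \Theta(\sqrt{T \log(1/\delta_0)}/(n \eps_0))$ makes each iteration $(\eps_0/\sqrt{T}, \delta_0/T)$-differentially private via the Gaussian mechanism, so the advanced composition theorem (Theorem~\ref{thm:composition}) applied across the $T$ rounds yields an $(\eps_0, \delta_0)$-DP algorithm end-to-end.

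For utility, I would track the excess function value $\loss(\param_t; \db) - \min_{\param \in \dom} \loss(\param; \db)$ directly rather than the parameter distance. Expanding $\ex{\|\param_{t+1} - \param^{\star}\|_2^2}$ via the update rule, invoking $\sigma$-strong convexity, and using that $Z_t$ is independent and zero-mean produces a one-step recursion; telescoping and applying Jensen on the averaged iterate yields
$$
\ex{\loss(\bar\param; \db) - \min_{\param \in \dom} \loss(\param; \db)} = \tilde O\!\left(\frac{1 + d \sigma_Z^2}{\sigma T}\right).
$$
Substituting the chosen $\sigma_Z^2$ gives excess risk $\tilde O\bigl(1/(\sigma T) + d/(\sigma n^2 \eps_0^2)\bigr)$, and setting $T = \Theta(n^2 \eps_0^2/d)$ balances the two terms and leaves $\tilde O(d/(\sigma n^2 \eps_0^2))$. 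Requiring this to be at most $\alpha_0$ and rearranging recovers the claimed sample complexity $n = \tilde O(\sqrt{d/(\sigma \alpha_0)}/\eps_0)$. To promote the expected-excess-risk guarantee to an $(\alpha_0, \beta_0)$ high-probability statement, I would run $O(\log(1/\beta_0))$ independent copies and privately select the best using the exponential mechanism on a privately estimated loss, which is the source of the $\polylog(1/\beta_0)$ factor.

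The principal difficulty is the utility analysis: the naive approach of first bounding $\ex{\|\param_T - \param^{\star}\|_2^2}$ and then converting to function value via Lipschitzness costs a square root and produces only $\tilde O(\sqrt{d}/(\sigma n \eps_0))$ excess risk, which corresponds to the weaker $n = \tilde O(\sqrt{d}/(\sigma \alpha_0 \eps_0))$. Extracting the sharper $1/\sqrt{\sigma \alpha_0}$ dependence appearing in the theorem requires handling the function value directly and observing that the injected Gaussian noise enters as a variance-like quadratic term that is averaged down by the $1/T$ contraction that strong convexity alone affords.
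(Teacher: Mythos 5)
The paper does not prove Theorem~\ref{thm:onestronglyconvex}: it is cited verbatim from Bassily, Smith, and Thakurta~\cite{BassilyST14} and used as a black-box oracle $\mech'$ inside Theorem~\ref{thm:putittogetheronline}, so there is no in-paper proof to compare against. Your reconstruction via noisy projected gradient descent with strongly-convex step sizes $\eta_t = 1/(\sigma t)$, Gaussian mechanism per iterate, and advanced composition is the standard route, and the arithmetic checks out: telescoping the one-step recursion gives $\ex{\loss_\db(\bar\param) - \loss_\db(\param^\star)} \lesssim (1 + d\sigma_Z^2)(\log T)/(\sigma T)$, and after substituting $\sigma_Z^2 = \Theta(T\log(1/\delta_0)/n^2\eps_0^2)$ and taking $T \gtrsim n^2\eps_0^2/d$ one indeed lands on $\tilde O(d/(\sigma n^2 \eps_0^2))$, which rearranges to the claimed $n = \tilde O(\sqrt{d/(\sigma\alpha_0)}/\eps_0)$. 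Your closing remark correctly identifies the crux: passing through $\ex{\|\param_T - \param^\star\|^2}$ and then invoking Lipschitzness loses a square root and recovers only $n = \tilde O(\sqrt{d}/(\sigma\alpha_0\eps_0))$, so one must bound the excess risk directly from the telescoped recursion. The only caveat worth flagging is that since this is an externally cited theorem, a correct proof sketch is exactly as much as one can ask for, and you have supplied one.
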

Combining Theorem~\ref{thm:putittogetheronline} and Theorem~\ref{thm:onestronglyconvex} yields the following result.
\begin{theorem}
Let $\lossset$ be the set of $\sigma$-strongly convex loss functions $\ell \from \dom \times \univ \to \R$ for
$
\dom \subseteq \set{\param \in \R^d \mid \, \|\param\|_2 \leq 1}
$
such that for every $\loss \in \lossset$, $\param \in \dom$, $\row \in \univ$,
$
\| \grad \loss_\row(\param) \|_2 \leq 1.
$
Let $\queryset_{\lossset}$ be the associated family of CM queries.
There is an $(\eps, \delta)$-differentially private algorithm that is $(\alpha, \beta)$-accurate for $k$ CM queries from $\queryset_{\lossset}$ on datasets of size $n$ for
$$
\rows = \tilde{O}\left(\frac{\sqrt{\log |\univ|}}{\eps} \max\left\{ \frac{\sqrt{d}}{\sqrt{\sigma} \alpha^{3/2}} , \frac{\log k}{\alpha^2} \right\}\right) \cdot \polylog\left(\frac{1}{\delta}, \frac{1}{\beta}\right)
$$
\end{theorem}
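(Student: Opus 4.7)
The plan is to apply Theorem~\ref{thm:putittogetheronline} black-box with the one-query oracle $\mech'$ supplied by Theorem~\ref{thm:onestronglyconvex}, exactly as in the Lipschitz and UGLM applications above. The only preliminary step is to verify that the scaling parameter $S$ appearing in Theorem~\ref{thm:putittogetheronline} is a constant on $\lossset$. Since $\dom$ is contained in the unit $L_2$ ball, any two parameters satisfy $\|\param - \param'\|_2 \leq 2$; combined with the assumption $\|\grad \loss_\row(\param)\|_2 \leq 1$, Cauchy--Schwarz yields $|\langle \param - \param', \grad \loss_\row(\param)\rangle| \leq 2$, so we may take $S = 2$.

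Next, I would substitute the parameter settings from Figure~\ref{fig:onlinepmw} into the bound of Theorem~\ref{thm:putittogetheronline}. The ``sparse vector'' branch of the max gives $\tilde{O}(\sqrt{\log|\univ|} \cdot \log k / (\eps \alpha^2))$, which is the second branch of the claimed bound. For the oracle branch, we have $\alpha_0 = \alpha/4$ and $\eps_0 = \Theta(\eps / \sqrt{T \log(1/\delta)})$ with $T = \Theta(\log|\univ|/\alpha^2)$, so $\eps_0 = \tilde{\Theta}(\eps\alpha/\sqrt{\log|\univ|})$. Plugging these into the single-query bound of Theorem~\ref{thm:onestronglyconvex} gives
$$
n' \;=\; \tilde{O}\!\left(\frac{\sqrt{d}}{\sqrt{\sigma\alpha_0}\,\eps_0}\right) \;=\; \tilde{O}\!\left(\frac{\sqrt{d\,\log|\univ|}}{\sqrt{\sigma}\,\alpha^{3/2}\,\eps}\right),
$$
which matches the first branch in the claimed maximum. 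Taking the max of the two branches yields exactly the expression in the theorem (absorbing $\polylog$ factors in $1/\delta$ and $1/\beta$ into the $\tilde{O}$).

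I do not expect any real obstacle: the argument is entirely a bookkeeping computation combining the reductions already in place. The one minor point worth checking carefully is that the division by $\sqrt{\sigma}$ inside $n'$ does not interact badly with the $\eps_0$-rescaling of $\mech'$ — but this is a plain substitution since $\sigma$ enters Theorem~\ref{thm:onestronglyconvex} only through the $\sqrt{\sigma\alpha_0}$ factor in the denominator and we substitute $\alpha_0 = \alpha/4$ directly. Everything else is identical in structure to the two preceding corollaries of Theorem~\ref{thm:putittogetheronline}.
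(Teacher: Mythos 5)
Your proposal is correct and follows exactly the paper's approach: the paper proves this theorem by simply combining Theorem~\ref{thm:putittogetheronline} with the single-query oracle of Theorem~\ref{thm:onestronglyconvex}, and your bookkeeping (establishing $S = 2$ via Cauchy--Schwarz, then substituting $\alpha_0 = \alpha/4$ and $\eps_0 = \tilde\Theta(\eps\alpha/\sqrt{\log|\univ|})$ into the oracle bound to obtain $n' = \tilde O(\sqrt{d\log|\univ|}/(\sqrt\sigma\,\alpha^{3/2}\eps))$) is precisely the intended computation.
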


\subsection{Running Time and Discussion of \ifnum\podscamera=1 \\ \else \fi Computational Complexity} \label{sec:complexity}
In this section we discuss the computational complexity of the algorithm.  To do so, we assume $\dom \subseteq \R^d$, and for simplicity and concreteness we consider the natural choice of data universe $\univ = \bits^{d}$, or equivalently $\univ = \set{\frac{\pm1}{\sqrt{d}}}^d$.  Since our algorithm uses the ability to solve a single CM query in $\lossset$ as a blackbox, we assume that this step can be done in $\poly(\rows, d)$ time both privately and non-privately.  For this informal discussion, we also ignore the dependence in running time on $S, \alpha, \beta, \eps, \delta$, which will not substantially affect the conclusions.  

There are three main steps that dominate the running time of each of the $k$ iterations:
\begin{enumerate}
\item Running the online sparse vector algorithm $\cSV$ on $q^j$.  This step can be done in time $\poly(n, d)$.
\item If $a^j = \top$, finding a private approximate minimizer of $\losst^{j}$.  By assumption, this step can be done in time $\poly(\rows, d)$.
\item If $a^j = \top$, computing the new histogram $\dbt^{t+1}$.  This step requires time $\tilde{O}(2^d)$.
\end{enumerate}

Since each of these steps is carried out for $k$ steps, the overall running time is $\poly(\rows, 2^d, k)$.  Even tough it was useful to think of the database as a histogram, which is a vector of length $2^d$, the input database $\db$ would more naturally be represented as a collection of records $\db \in (\bits^{d})^\rows$.  Thus it is natural to look for an algorithm with running time $\poly(\rows, d, k)$.  In summary, even when the individual loss functions can be privately minimized in $\poly(\rows, d)$ time, our algorithm requires time $\poly(\rows, 2^d, k)$, which is exponential in the dimension of the data.  More generally, there is a polynomial dependence on $|\univ|$, where one would hope for a polylogarithmic dependence.

Unfortunately, this exponential running time is inherent.  Since CM queries generalize the well studied class of linear queries, we can carry over the hardness results of Ullman~\cite{Ullman13} to this setting.  Specifically, assuming the existence of one-way functions, there is no $\poly(\rows, d)$-time algorithm that takes as input a set of $k$ arbitrary differentiable convex loss functions, and a database $\db \in (\bits^{d})^n$ for $n \leq k^{1/2 - o(1)}$, and and outputs answers that are even $1/100$-accurate for each query in $\lossset$.

Although the hardness result rules out an efficient mechanism for answering an arbitrary large set of CM queries, more efficient algorithms may be possible for specific families $\lossset$.  In the setting of counting queries, such algorithms are known for special cases such as \emph{interval queries}~\cite{BeimelNS13} and \emph{marginal queries}~\cite{GuptaHRU11, HardtRS12, ThalerUV12, ChandrasekaranTUW14, DworkNT13}.  It would be interesting to see if techniques from those works can be applied to give more efficient algorithms for natural families of CM queries.  We remark that Ullman and Vadhan~\cite{UllmanV11} show that efficient algorithms that output synthetic data cannot be accurate even for very simple families of counting queries, and thus also for certain very simple families of CM queries.  Our algorithm indeed can be modified to output a synthetic dataset (namely, the final histogram $\dbt^{t}$ used in the execution of the algorithm), and thus substantially different techniques would be required to answer interesting classes of CM queries more efficiently.  We leave it as an interesting direction for future work to improve the running time of our algorithm for interesting restricted families of CM queries.

\addcontentsline{toc}{section}{Acknowledgements}
\section*{Acknowledgements}
\ifnum\podscamera=0
We thank Adam Smith and Salil Vadhan for helpful discussions.
\else
We would like to thank Adam Smith and Salil Vadhan for helpful discussions.
\fi

\ifnum\podscamera=1
	\pagebreak
	\bibliographystyle{abbrv}
	\bibliography{references} 
\else
	\addcontentsline{toc}{section}{References}
	\bibliographystyle{alpha}
	\bibliography{references}
\fi

\end{document}